\setlist[enumerate]{noitemsep,partopsep=0pt,parsep=0pt}
\setlist[itemize]{noitemsep,partopsep=0pt,parsep=0pt}
\tikzstyle{porte} = [fill=gray!25, draw]
\newtheorem{thm}{Theorem}
\newtheorem{defn}[thm]{Definition}
\newtheorem{lem}[thm]{Lemma}
\newtheorem{cor}[thm]{Corollary}
\newtheorem{question}{Question}
\theoremstyle{remark}
\newtheorem{rem}{Remark}
\numberwithin{equation}{section}
\DeclareMathOperator{\tr}{tr}
\DeclareMathOperator{\rank}{rank}
\DeclareMathOperator{\CPTP}{CPTP}
\newcommand{\ket}[1]{|#1\rangle}
\newcommand{\bra}[1]{\langle#1|}
\newcommand{\ketbra}[2]{\ket{#1}\!\bra{#2}}
\newcommand{\proj}[1]{\ketbra{#1}{#1}}
\begin{document}
\title{Entanglement fidelity of Petz decoder for one-shot entanglement transmission}
\author{Laura Burri}
\affiliation{Institute for Theoretical Physics, ETH Zurich, Zurich, Switzerland}

\begin{abstract}
One-shot entanglement transmission is a quantum information processing task where a quantum state is sent to a second party over a noisy channel. The goal of the task is to approximately recover the original state by applying a decoder to the output of the noisy channel. In this work, we note that the Petz map induces a universal decoder for one-shot entanglement transmission, and we quantify its entanglement fidelity. This fidelity is found to be determined by the singly minimized Petz R\'enyi mutual information of order $1/2$ associated with a complementary channel, thus providing an operational interpretation of this information measure. Furthermore, we compare the performance of this decoder to that of the decoder induced by the twirled Petz map and the Schumacher-Westmoreland decoder.
\end{abstract}

\maketitle


\section{Introduction}

Faithful transmission of quantum information between various parties is essential for quantum computation and communication. 
The analysis of related information processing tasks typically requires the use of appropriate information measures, which, in turn, gain operational significance through their relevance to these tasks. 
This paper focuses on the task of one-shot entanglement transmission and investigates which information measure characterizes this task when the Petz map is employed for achieving entanglement transmission.

\emph{One-shot entanglement transmission} is concerned with a scenario where a quantum state $\rho_A$ is sent to a second party over a noisy channel. 
The noisy channel is described by a completely positive, trace-preserving (CPTP) map from $A$ to $B$, denoted by $\mathcal{N}_{A\rightarrow B}$. 
The goal of the task is to subsequently apply a CPTP map from $B$ to $A$ in such a way that the final state on $A$ is approximately equal to the initial state $\rho_A$, as measured by the entanglement fidelity. 
This choice of performance measure is common in the literature on one-shot entanglement transmission~\cite{schumacher1996sending,schumacher1996quantum,schumacher2001approximate,fletcher2007optimum,reimpell2005iterative,reimpell2006commentoptimumquantumerror,datta2013one}, 
though it is, of course, not the only possible choice~\cite[Section~IV.C]{schumacher1996sending}. 

Previous works have analyzed the task of one-shot entanglement transmission from various perspectives. 
In~\cite{schumacher2001approximate}, a universal recovery map, henceforth referred to as the~\emph{Schumacher-Westmoreland (SW) decoder}, was constructed. 
In this context, ``universal'' means that the decoder's construction recipe is applicable to any given $(\rho_A,\mathcal{N}_{A\rightarrow B})$ and that the decoder achieves perfect entanglement transmission whenever possible. 
The question of optimal recovery was addressed in~\cite{fletcher2007optimum}, which showed that the maximum achievable entanglement fidelity for any fixed $(\rho_A,\mathcal{N}_{A\rightarrow B})$ can be computed via a semidefinite program -- a widely studied type of convex optimization problem for which efficient solution algorithms are known. 
Similarly,~\cite{reimpell2005iterative,reimpell2006commentoptimumquantumerror} examined the case where not only the recovery map but also the encoding is optimized. 
\cite{datta2013one,beigi2016decoding} investigated one-shot capacities for entanglement transmission.

Since the goal of one-shot entanglement transmission is to recover the information originally in $A$ from $B$, it is natural to view it as a specific instance of information recovery. 
It therefore seems pertinent to ask whether the Petz map~\cite{petz1986sufficient,petz1988sufficiency,ohya1993quantum,petz2003monotonicity}, a well-established general-purpose tool for information recovery, can be applied to this task and, if so, how well it performs. 
This work addresses these questions in the case where the performance measure is chosen to be the entanglement fidelity.

Specifically, we define a decoder for one-shot entanglement transmission based on the Petz map and quantify its performance in terms of the entanglement fidelity. 
Similarly, we define and analyze a decoder based on the twirled Petz map, which is a modification of the Petz map that achieves approximate recovery~\cite{junge2018universal}. 
The main result of this paper, stated as Theorem~\ref{thm:petz}, 
quantifies the performance of the corresponding decoders: the \emph{Petz decoder} and the \emph{twirled Petz decoder}. 
As a corollary, the theorem implies that the entanglement fidelity of the Petz decoder is determined by the singly minimized Petz R\'enyi mutual information of order $1/2$ associated with a complementary channel to $\mathcal{N}_{A\rightarrow B}$, 
and that the twirled Petz decoder never outperforms the Petz decoder (see Corollary~\ref{cor:petz}).

The analytical results presented in Theorem~\ref{thm:petz} and Corollary~\ref{cor:petz} do not provide any insight into how the performance of the Petz decoder compares to that of the SW decoder. 
This question is answered numerically for three specific settings: the 3-qubit bit-flip code for the bit-flip channel, the Leung-Nielsen-Chuang-Yamamoto (LNCY) 4-qubit code~\cite{leung1997approximate} for the amplitude damping channel, and the 5-qubit code~\cite{laflamme1996perfect} for the amplitude damping channel. 
These three settings were chosen for their simplicity and to align with similar existing literature~\cite{fletcher2007optimum,reimpell2006commentoptimumquantumerror}. 
For each setting, the performance of the Petz decoder, the twirled Petz decoder, the SW decoder, and an optimal decoder is evaluated, enabling a direct comparison (see Section~\ref{sec:numerics}).

\paragraph*{Outline.} 
The remainder of this paper is structured as follows. 
Section~\ref{sec:preliminaries} provides an overview of the notation employed in this work (\ref{ssec:notation}) and contains definitions of information measures (\ref{ssec:entropies}), 
the fidelity and entanglement fidelity (\ref{ssec:fidelity}), 
and the Petz map and twirled Petz map (\ref{ssec:petz}). 
Section~\ref{sec:main} addresses the topic of one-shot entanglement transmission and constitutes the core of this paper. 
First, previous results on the performance of the SW decoder are reviewed and improved (\ref{sec:sw-decoder}). 
Then, the Petz decoder and the twirled Petz decoder are defined, and their performance is analyzed (\ref{sec:petz-decoder}), leading to the main results of this paper (Theorem~\ref{thm:petz}, Corollary~\ref{cor:petz}). 
Finally, numerical results on the performance of various decoders are presented, enabling a direct comparison (\ref{sec:numerics}). 
The paper concludes with a discussion of the results in Section~\ref{sec:discussion}.

\section{Preliminaries}\label{sec:preliminaries}
\subsection{Notation}\label{ssec:notation}
``$\log$'' refers to the logarithm with base 2, and ``$\ln$'' refers to the natural logarithm. 
For any natural number $n$, the set of natural numbers strictly less than $n$ is denoted by $[n]\coloneqq\{0,1,\dots, n-1\}$.

Throughout this paper, all Hilbert spaces are assumed to be finite-dimensional for simplicity. 
The dimension of a Hilbert space $A$ will be denoted by $d_A$. 
The tensor product of two Hilbert spaces $A$ and $B$ is occasionally denoted by $AB$ instead of $A\otimes B$. 
The set of linear maps from $A$ to $B$ is denoted by $\mathcal{L}(A,B)$, and we define $\mathcal{L}(A)\coloneqq \mathcal{L}(A,A)$. 
To simplify the notation, identities are occasionally omitted, i.e., for any $X_A\in \mathcal{L}(A)$, ``$X_A$'' may be interpreted as $X_A\otimes 1_B\in \mathcal{L}(AB)$ where $1_B\in \mathcal{L}(B)$ denotes the identity operator on $B$. 
For $X,Y\in \mathcal{L}(A)$, $X\ll Y$ is true iff the kernel of $Y$ is contained in the kernel of $X$. 
For $X,Y\in \mathcal{L}(A)$, $X\perp Y$ is true iff $XY=YX=0$. 
The rank of $X\in \mathcal{L}(A)$ is denoted as $\rank(X)$.

For any $X\in \mathcal{L}(A,B)$, the adjoint of $X$ with respect to the inner products of $A$ and $B$ is denoted by $X^\dagger$. 
For $X\in \mathcal{L}(A)$, $X\geq 0$ is true iff $X$ is positive semidefinite. 
For a positive semidefinite $X\in \mathcal{L}(A)$, $X^p$ for $p\in \mathbb{R}$ is defined by taking the power on the support of $X$. 
The operator absolute value of $X\in \mathcal{L}(A)$ is $\lvert X\rvert \coloneqq (X^\dagger X)^{1/2}$. 
The trace of $X\in \mathcal{L}(A)$ is denoted as $\tr[X]$. 
The Schatten $p$-norm of $X\in \mathcal{L}(A)$ is defined as 
$\|X \|_p\coloneqq (\tr[\lvert X\rvert^p])^{1/p}$ for $p\in [1,\infty)$, 
and the Schatten $p$-quasi-norm is defined as $\|X \|_p\coloneqq (\tr[\lvert X\rvert^p])^{1/p}$ for $p\in (0,1)$.

The set of quantum states on $A$ is denoted by 
$\mathcal{S}(A)\coloneqq\{\rho\in \mathcal{L}(A):\rho\geq 0,\tr[\rho]=1\}$. 
The set of completely positive, trace-preserving linear maps from $\mathcal{L}(A)$ to $\mathcal{L}(B)$ is denoted by $\CPTP(A,B)$. 
Elements of this set are called (quantum) channels. 
The identity channel is denoted as $\mathcal{I}_{A\rightarrow A}$, and it is defined by $\mathcal{I}_{A\rightarrow A}(X_A)\coloneqq X_A$ for all $X_A\in \mathcal{L}(A)$. 
To simplify the notation, identity channels are occasionally omitted, i.e., for any $\mathcal{M}_{A\rightarrow B}\in \CPTP(A,B)$, ``$\mathcal{M}_{A\rightarrow B}$'' may be interpreted as $\mathcal{M}_{A\rightarrow B}\otimes \mathcal{I}_{E\rightarrow E}\in \CPTP(AE,BE)$. 
The partial trace over $A$ is denoted by $\tr_A$. 
For any fixed $\mathcal{M}_{A\rightarrow B}\in \CPTP(A,B)$, a map $\mathcal{M}_{A\rightarrow E}^c\in \CPTP(A,E)$ is said to be a complementary channel to $\mathcal{M}_{A\rightarrow B}$ if there exists an isometry $V\in \mathcal{L}(A,BE)$ such that 
$\mathcal{M}_{A\rightarrow B}(X_A)= \tr_E[VX_AV^\dagger] $ and 
$\mathcal{M}_{A\rightarrow E}^c(X_A)= \tr_B[VX_AV^\dagger] $ for all $X_A\in \mathcal{L}(A)$.

\subsection{Entropies, divergences, R\'enyi mutual information}\label{ssec:entropies}
The \emph{von Neumann entropy} of $\rho\in \mathcal{S}(A)$ is $H(A)_{\rho}\coloneqq -\tr[\rho\log \rho]$, 
and the \emph{R\'enyi entropy (of order $\alpha$)} is $H_{\alpha}(A)_\rho\coloneqq \frac{1}{1-\alpha}\log\tr[\rho^{\alpha}]$ for $\alpha \in (0,1)\cup (1,\infty)$. 
For a bipartite quantum state $\rho_{AB}\in \mathcal{S}(AB)$, 
the \emph{conditional entropy} is $H(A|B)_\rho\coloneqq H(AB)_\rho - H(B)_\rho$, 
the \emph{mutual information} is $I(A:B)_\rho\coloneqq H(A)_\rho +H(B)_\rho - H(AB)_\rho$, 
and the \emph{coherent information} is $I(A\, \rangle B)_{\rho}\coloneqq -H(A|B)_\rho $.

The \emph{(quantum) relative entropy} of $\rho\in \mathcal{S}(A)$ relative to a positive semidefinite $\sigma\in \mathcal{L}(A)$ is 
\begin{align}
D(\rho\| \sigma) \coloneqq \tr[\rho (\log \rho - \log \sigma)]
\end{align}
if $\rho\ll \sigma$, 
and $D(\rho\| \sigma) \coloneqq \infty$ else.

The \emph{Petz (quantum R\'enyi) divergence (of order $\alpha$)} of $\rho\in \mathcal{S}(A)$ relative to a positive semidefinite $\sigma\in \mathcal{L}(A)$ is defined for $\alpha\in (0,1)\cup (1,\infty)$ as~\cite{petz1986quasi}
\begin{equation}
D_\alpha (\rho\| \sigma)\coloneqq\frac{1}{\alpha -1} \log \tr [\rho^\alpha \sigma^{1-\alpha}]
\end{equation}
if $(\alpha <1\land \rho\not\perp\sigma)\lor \rho\ll \sigma$, 
and $D_\alpha (\rho\| \sigma)\coloneqq \infty$ else. 
Moreover, $D_\alpha$ is defined for $\alpha\in \{0,1\}$ as the respective limit of $D_\alpha$ for $\alpha\rightarrow\{0,1\}$. 
The limit $\alpha\rightarrow 1$ reduces to the quantum relative entropy, i.e., 
$D_1 (\rho\| \sigma)=D(\rho\| \sigma)$~\cite{lin2015investigating,tomamichel2016quantum}.

The \textit{sandwiched (quantum R\'enyi) divergence (of order $\alpha$)} of $\rho\in \mathcal{S}(A)$ relative to a positive semidefinite $\sigma\in \mathcal{L}(A)$ is defined for $\alpha\in (0,1)\cup (1,\infty)$ as~\cite{mueller2013quantum,wilde2014strong}
\begin{align}
\widetilde{D}_\alpha (\rho\| \sigma)&\coloneqq 
\frac{1}{\alpha -1}\log 
\lVert\sigma^{\frac{1-\alpha}{2\alpha}} \rho  \sigma^{\frac{1-\alpha}{2\alpha}}\rVert^\alpha_\alpha
\end{align}
if $(\alpha <1\land \rho\not\perp\sigma)\lor \rho\ll \sigma$, 
and $\widetilde{D}_\alpha (\rho\| \sigma)\coloneqq \infty$ else. 
Moreover, $\widetilde{D}_\alpha$ is defined for $\alpha\in \{1,\infty\}$ as the respective limit of $\widetilde{D}_\alpha$ for $\alpha\rightarrow\{1,\infty\}$. 
The limit $\alpha\rightarrow 1$ reduces to the quantum relative entropy, i.e., 
$\widetilde{D}_1 (\rho\| \sigma)=D(\rho\| \sigma)$~\cite{mueller2013quantum,wilde2014strong,tomamichel2016quantum}.

The \emph{minimized generalized Petz R\'enyi mutual information (of order $\alpha$)} of $\rho_{AB}\in \mathcal{S}(AB)$ relative to a positive semidefinite $\sigma_A\in \mathcal{L}(A)$ is defined for $\alpha\in [0,\infty)$ as~\cite{hayashi2016correlation,burri2024doublyminimizedpetzrenyi}
\begin{align}
I_\alpha^\downarrow(\rho_{AB}\| \sigma_A)\coloneqq \inf_{\tau_B\in \mathcal{S}(B)}D_\alpha (\rho_{AB}\| \sigma_A\otimes \tau_B ) .
\end{align}

The \emph{singly minimized Petz R\'enyi mutual information (of order $\alpha$)} of $\rho_{AB}\in \mathcal{S}(AB)$ is defined for $\alpha\in [0,\infty)$ as~\cite{gupta2014multiplicativity,hayashi2016correlation,burri2024doublyminimizedpetzrenyi}
\begin{align}
I_\alpha^{\uparrow\downarrow}(A:B)_\rho 
\coloneqq \inf_{\tau_B\in \mathcal{S}(B)} D_\alpha (\rho_{AB}\| \rho_A\otimes \tau_B)
=I_\alpha^\downarrow (\rho_{AB}\| \rho_A) .
\end{align}

The \emph{non-minimized generalized sandwiched R\'enyi mutual information (of order $\alpha$)} of $\rho_{AB}\in \mathcal{S}(AB)$ relative to a positive semidefinite $\sigma_A\in \mathcal{L}(A)$ is defined for $\alpha\in (0,\infty]$ as~\cite{burri2024doublyminimizedsandwichedrenyi}
\begin{align}
\widetilde{I}_\alpha^\uparrow(\rho_{AB}\| \sigma_A)\coloneqq \widetilde{D}_\alpha (\rho_{AB}\| \sigma_A\otimes \rho_B ) .
\end{align}

The \emph{non-minimized sandwiched R\'enyi mutual information (of order $\alpha$)} of $\rho_{AB}\in \mathcal{S}(AB)$ is defined for $\alpha\in (0,\infty]$ as~\cite{burri2024doublyminimizedsandwichedrenyi}
\begin{align}
\widetilde{I}_\alpha^{\uparrow\uparrow}(A:B)_\rho \coloneqq \widetilde{D}_\alpha (\rho_{AB}\| \rho_A\otimes \rho_B)
=\widetilde{I}_\alpha^\uparrow(\rho_{AB}\| \rho_A).
\end{align}

\subsection{Fidelity and entanglement fidelity}\label{ssec:fidelity}
The \emph{fidelity} between $\rho,\sigma\in \mathcal{S}(A)$ is 
$F(\rho,\sigma)\coloneqq \|\rho^{\frac{1}{2}} \sigma^{\frac{1}{2}} \|_1$.

Let $\rho_A\in \mathcal{S}(A)$ and let $\mathcal{M}_{A\rightarrow A}\in \CPTP(A,A)$. 
Let $\ket{\rho}_{RA}\in RA$ be such that $\tr_R[\proj{\rho}_{RA}]=\rho_A$. 
Then the \emph{entanglement fidelity} of $\mathcal{M}_{A\rightarrow A}$ with respect to $\rho_A$ is~\cite{schumacher1996sending}
\begin{align}\label{eq:def-ef}
F_e(\rho_A,\mathcal{M}_{A\rightarrow A})
&\coloneqq F^2(\proj{\rho}_{RA},\mathcal{M}_{A\rightarrow A}(\proj{\rho}_{RA}))
\\
&=\bra{\rho}_{RA} \mathcal{M}_{A\rightarrow A}(\proj{\rho}_{RA}) \ket{\rho}_{RA} .
\end{align}
Note that the entanglement fidelity is well-defined, as the right-hand side of~\eqref{eq:def-ef} remains invariant under different choices of the purification $\ket{\rho}_{RA}$ of $\rho_A$~\cite{schumacher1996sending}. 
Thus, the left-hand side of~\eqref{eq:def-ef} only depends on $\rho_A$ and $\mathcal{M}_{A\rightarrow A}$.

\subsection{Petz map and twirled Petz map}\label{ssec:petz}
The \emph{data processing inequality (for the relative entropy)}~\cite{lindblad1975completely,uhlmann1977relative} asserts that for any $\mathcal{N}\in \CPTP(A,B),\rho,\sigma\in \mathcal{S}(A)$
\begin{align}\label{eq:monotonicity}
D(\rho\| \sigma)\geq D(\mathcal{N}(\rho)\| \mathcal{N}(\sigma)).
\end{align}

Let $\mathcal{N}\in \CPTP(A,B)$ and let $\rho,\sigma\in \mathcal{S}(A)$ be such that $\rho\ll \sigma$. 
As shown in~\cite{petz1986sufficient,petz1988sufficiency,ohya1993quantum,petz2003monotonicity}, if the data processing inequality is saturated, i.e., 
$D(\rho\| \sigma)=D(\mathcal{N}(\rho)\| \mathcal{N}(\sigma))$, 
then there exists $\mathcal{R}\in \CPTP(B,A)$ such that 
$\mathcal{R}\circ \mathcal{N}(\sigma)=\sigma$ and $\mathcal{R}\circ \mathcal{N}(\rho)=\rho$. 
Moreover, an explicit and $\rho$-independent form for the recovery map is known~\cite{petz1986sufficient,petz1988sufficiency,ohya1993quantum,petz2003monotonicity}: 
The aforementioned properties are satisfied if the recovery map $\mathcal{R}$ is given by the \emph{Petz (recovery) map}, which is defined on the support of $\mathcal{N}(\sigma)$ as
\begin{align}\label{eq:def_P}
\mathcal{P}_{\sigma,\mathcal{N}}:\quad  X_B\mapsto 
\sigma^{\frac{1}{2}}\mathcal{N}^\dagger (\mathcal{N}(\sigma)^{-\frac{1}{2}} X_B \mathcal{N}(\sigma)^{-\frac{1}{2}}) \sigma^{\frac{1}{2}}.
\end{align}

Let $\mathcal{N}\in \CPTP(A,B)$ and let $\rho,\sigma\in \mathcal{S}(A)$ be such that $\rho\ll \sigma$. 
As shown in~\cite{wilde2015recoverability,sutter2016strengthened}, this implies that there exists $\mathcal{R}\in \CPTP(B,A)$ such that $\mathcal{R}\circ \mathcal{N}(\sigma)=\sigma$ and
\begin{align}\label{eq:d-recovery}
D(\rho\| \sigma)-D(\mathcal{N}(\rho)\| \mathcal{N}(\sigma))\geq -2\log F(\rho, \mathcal{R}\circ \mathcal{N}(\rho)).
\end{align}
This inequality is a strengthened version of~\eqref{eq:monotonicity} 
since the right-hand side of~\eqref{eq:d-recovery} is non-negative. 
Moreover, an explicit and $\rho$-independent form for the recovery map was found~\cite{junge2018universal}, which we refer to as the \textit{twirled Petz (recovery) map}. 
It is defined on the support of $\mathcal{N}(\sigma)$ as
\begin{align}\label{eq:def_R}
\mathcal{R}_{\sigma,\mathcal{N}}:\quad X_B\mapsto 
\int_{-\infty}^{\infty}\mathrm{d}t\, \beta_0(t) \mathcal{R}_{\sigma,\mathcal{N}}^{t/2}(X_B)
\end{align}
where
\begin{align}\label{eq:def_Rt}
\mathcal{R}_{\sigma,\mathcal{N}}^t:\quad X_B\mapsto 
\sigma^{\frac{1}{2}-it}\mathcal{N}^\dagger(\mathcal{N}(\sigma)^{-\frac{1}{2}+it} X_B \mathcal{N}(\sigma)^{-\frac{1}{2}-it})\sigma^{\frac{1}{2}+it}
\end{align}
and $\beta_0$ is the probability density function defined by
\begin{align}
\beta_0(t)\coloneqq \frac{\pi}{2}(\cosh(\pi t)+1)^{-1} 
\qquad\forall t\in \mathbb{R} .
\label{eq:def-beta0}
\end{align}

\section{One-shot entanglement transmission}\label{sec:main}

\paragraph*{Problem definition.}  
The task of \emph{one-shot entanglement transmission} for any given $(\rho_A,\mathcal{N}_{A\rightarrow B})\in \mathcal{S}(A)\times \CPTP(A,B)$ consists in the construction of a \emph{decoder (or: recovery map)} $\mathcal{D}_{B\rightarrow A}\in \CPTP(B,A)$. 
The performance of the decoder is measured by the entanglement fidelity 
\begin{align}\label{eq:def_fe}
F_e(\rho_A,\mathcal{D}_{B\rightarrow A}\circ\mathcal{N}_{A\rightarrow B}).
\end{align}
The greater the entanglement fidelity, the better the performance of the decoder. 
The task of one-shot entanglement transmission is depicted in Figure~\ref{fig:scheme-entanglement-transmission}.

\begin{figure}\centering
\begin{overpic}[width=.4\textwidth]{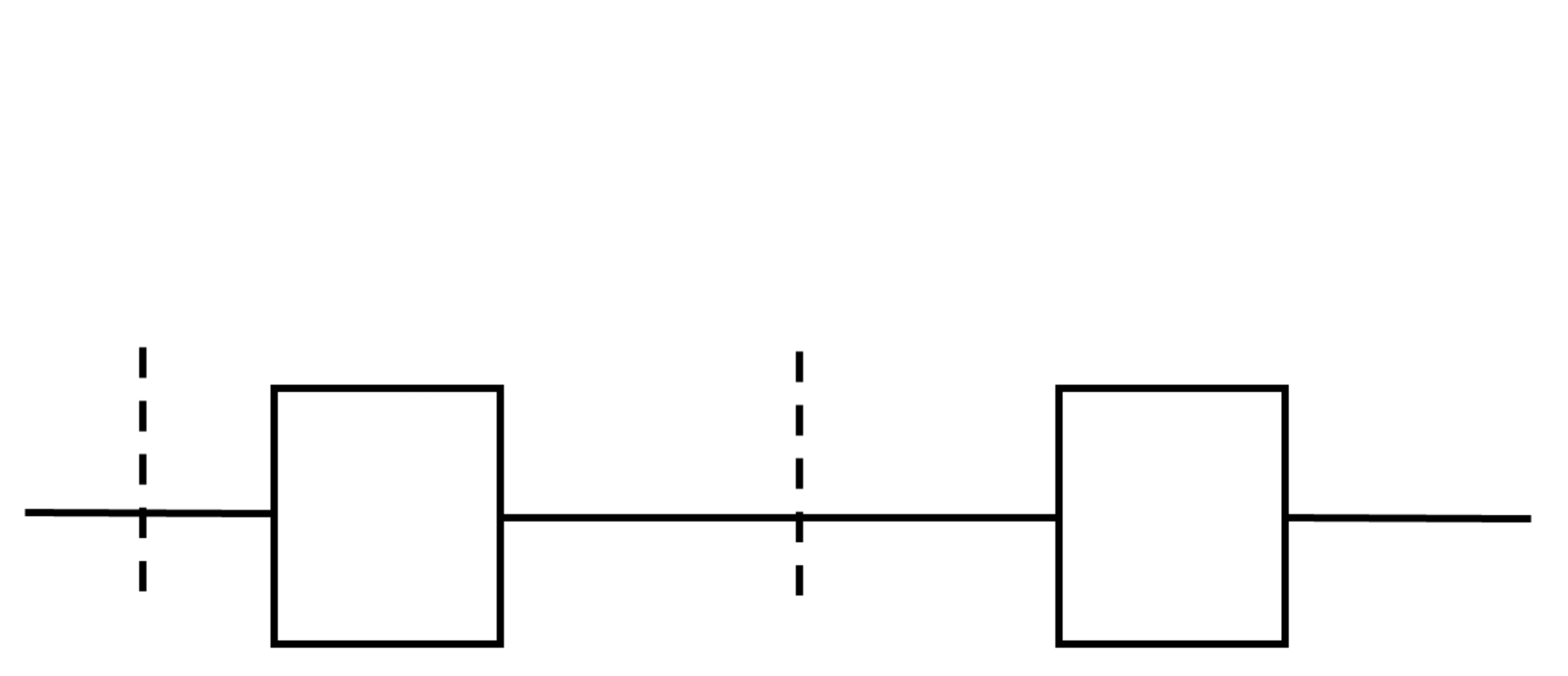}
  \put(-4,9){$A$}
  \put(5,-2){$\rho_A$}
  \put(22,9){$\mathcal{N}$}
  \put(35,4){$B$}
  \put(50,-2){$\sigma_{B}$}
  \put(72,9){$\mathcal{D}$}
  \put(85,4){$A$}
\end{overpic}
\hspace*{.1\textwidth}
\begin{overpic}[width=.4\textwidth]{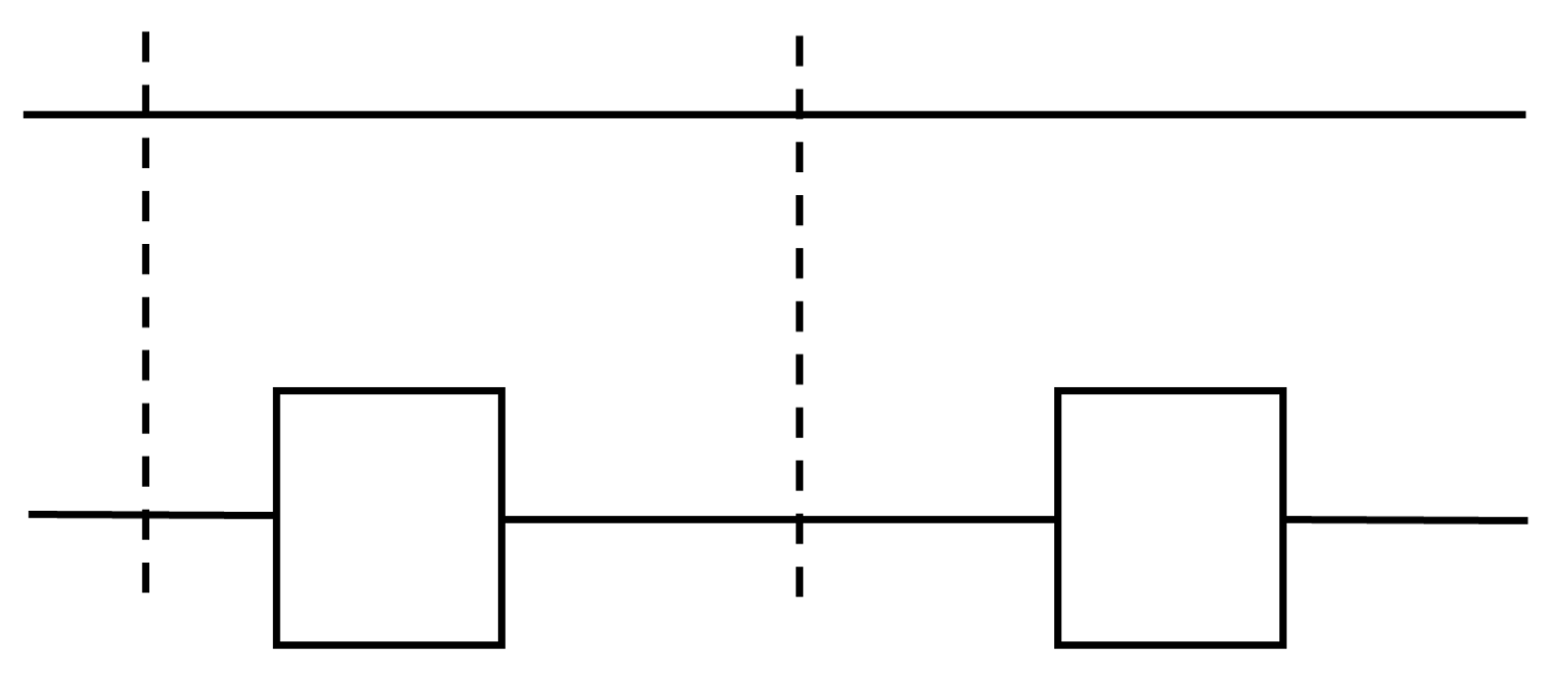}
  \put(-4,34){$R$}
  \put(-4,9){$A$}
  \put(3,-2){$\proj{\rho}_{RA}$}
  \put(22,9){$\mathcal{N}$}
  \put(35,4){$B$}
  \put(50,-2){$\sigma_{RB}$}
  \put(72,9){$\mathcal{D}$}
  \put(85,4){$A$}
\end{overpic}
\vspace*{1em}
\caption{
\textbf{Left.} Coding scheme for one-shot entanglement transmission of a quantum state $\rho_A$ over the channel $\mathcal{N}_{A\rightarrow B}$. 
The intermediate state is given by $\sigma_B\coloneqq \mathcal{N}_{A\rightarrow B}(\rho_A)$. 
According to~\eqref{eq:def_fe}, the performance of the decoder $\mathcal{D}_{B\rightarrow A}$ is measured by the entanglement fidelity $F_e(\rho_A,\mathcal{D}_{B\rightarrow A}\circ \mathcal{N}_{A\rightarrow B})$. 
\textbf{Right.} 
Let $\ket{\rho}_{RA}\in RA$ be such that $\tr_R[\proj{\rho}_{RA}]=\rho_A$. 
The initial state is then $\proj{\rho}_{RA}$ and the intermediate state is $\sigma_{RB}\coloneqq \mathcal{N}_{A\rightarrow B}(\proj{\rho}_{RA})$. 
The squared fidelity between the initial and final state is 
$F^2(\proj{\rho}_{RA},\mathcal{D}_{B\rightarrow A}\circ\mathcal{N}_{A\rightarrow B}(\proj{\rho}_{RA}))$. 
According to~\eqref{eq:def-ef}, this squared fidelity (illustrated on the right-hand side) is identical to the entanglement fidelity $F_e(\rho_A,\mathcal{D}_{B\rightarrow A}\circ \mathcal{N}_{A\rightarrow B})$ (illustrated on the left-hand side).
}
\label{fig:scheme-entanglement-transmission}
\end{figure}

\begin{rem}[Optimal decoder] 
For any fixed $(\rho_A,\mathcal{N}_{A\rightarrow B})\in \mathcal{S}(A)\times \CPTP(A,B)$, the \emph{maximal achievable entanglement fidelity} is 
\begin{align}\label{eq:optimal}
\max_{\mathcal{D}_{B\rightarrow A}\in \CPTP(B,A)}
F_e(\rho_A,\mathcal{D}_{B\rightarrow A}\circ\mathcal{N}_{A\rightarrow B}).
\end{align}
If $\mathcal{D}_{B\rightarrow A}\in \CPTP(B,A)$ is an optimizer for this optimization problem, then $\mathcal{D}_{B\rightarrow A}$ is said to be \emph{optimal}. 
Remarkably, the optimization problem~\eqref{eq:optimal} can be expressed as a semidefinite program~\cite{fletcher2007optimum}. 
Since efficient algorithms exist for solving semidefinite programs, this implies that~\eqref{eq:optimal} can often be computed efficiently. 
In particular, this will be the case for the settings studied numerically in Section~\ref{sec:numerics}.
\end{rem}

\subsection{Schumacher-Westmoreland decoder}\label{sec:sw-decoder}

In Appendix~\ref{app:sw-construction}, we review the construction of the SW decoder~\cite{schumacher2001approximate} for any given $(\rho_A,\mathcal{N}_{A\rightarrow B})\in \mathcal{S}(A)\times \CPTP(A,B)$. 
Building on this, this section presents new results on the performance of the SW decoder. 
To assess the performance of the SW decoder and facilitate a direct comparison with other decoders, it is desirable to express its entanglement fidelity -- or at least a lower bound of it -- as a function of $\sigma_{RB}\coloneqq\mathcal{N}_{A\rightarrow B}(\proj{\rho}_{RA})$. 
The following theorem establishes such a bound. 
The proof of Theorem~\ref{thm:sw} is given in Appendix~\ref{app:sw-proof}.

\begin{thm}[Entanglement fidelity of SW decoder]\label{thm:sw}
Let $(\rho_A,\mathcal{N}_{A\rightarrow B})\in \mathcal{S}(A)\times \CPTP(A,B)$. 
Let $\mathcal{D}^{\mathrm{SW}}_{B\rightarrow A}$ be the SW decoder for $(\rho_A,\mathcal{N}_{A\rightarrow B})$. 
Let $\ket{\rho}_{RA}\in RA$ be such that $\tr_R[\proj{\rho}_{RA}]=\rho_A$ and let 
$\sigma_{RB}\coloneqq \mathcal{N}_{A\rightarrow B}(\proj{\rho}_{RA})$. 
Then
\begin{align}\label{eq:fidelity-sw}
\log F_e(\rho_A,\mathcal{D}^{\mathrm{SW}}_{B\rightarrow A}\circ\mathcal{N}_{A\rightarrow B})
\geq I_2^\downarrow(\sigma_{RB}\| \sigma_R^{-1}).
\end{align}
\end{thm}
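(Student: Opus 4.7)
The approach is to (i)~start from the explicit form of the SW decoder given in Appendix~\ref{app:sw-construction}, (ii)~derive a closed-form lower bound on its entanglement fidelity, and (iii)~match this with $2^{I_2^\downarrow(\sigma_{RB}\|\sigma_R^{-1})}$ via the closed-form minimizer of $I_2^\downarrow$. I would begin by recalling the SW construction: fixing a Kraus representation $\{N_k\}$ of $\mathcal{N}_{A\rightarrow B}$ and setting $A_k\coloneqq N_k\rho_A^{1/2}$, the SW decoder is assembled from the polar decompositions $A_k=U_k|A_k|$, yielding a ``pretty good measurement''-style lower bound on $F_e(\rho_A,\mathcal{D}^{\mathrm{SW}}_{B\rightarrow A}\circ\mathcal{N}_{A\rightarrow B})$ expressible in terms of the $A_k$.

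Next, I would simplify the right-hand side of \eqref{eq:fidelity-sw}. Unpacking the definitions,
\begin{align*}
I_2^\downarrow(\sigma_{RB}\|\sigma_R^{-1})=\log\inf_{\tau_B\in\mathcal{S}(B)}\tr\bigl[\sigma_{RB}^2\,(\sigma_R\otimes\tau_B^{-1})\bigr].
\end{align*}
Setting $M_B\coloneqq\tr_R[\sigma_{RB}^2(\sigma_R\otimes 1_B)]$ -- which is positive semidefinite since $\tr_R[\sigma_{RB}^2(\sigma_R\otimes 1_B)]=\tr_R[(\sigma_R^{1/2}\otimes 1_B)\sigma_{RB}^2(\sigma_R^{1/2}\otimes 1_B)]$ by partial-trace cyclicity -- the classical identity $\inf_{\tau_B\in\mathcal{S}(B)}\tr[M_B\tau_B^{-1}]=(\tr[M_B^{1/2}])^2$, attained at $\tau_B\propto M_B^{1/2}$, reduces \eqref{eq:fidelity-sw} to
\begin{align*}
F_e(\rho_A,\mathcal{D}^{\mathrm{SW}}_{B\rightarrow A}\circ\mathcal{N}_{A\rightarrow B})\geq\bigl(\tr\bigl[M_B^{1/2}\bigr]\bigr)^2.
\end{align*}
The concluding step is to connect the two sides. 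Rewriting $\sigma_{RB}=\sum_k(1_R\otimes N_k)\ketbra{\rho}{\rho}_{RA}(1_R\otimes N_k^\dagger)$ and using the identity $\bra{\rho}_{RA}(1_R\otimes X_A)\ket{\rho}_{RA}=\tr[\rho_AX_A]$, both the SW lower bound from step~1 and $(\tr[M_B^{1/2}])^2$ become trace expressions in the operators $A_k^\dagger A_\ell=\sqrt{\rho_A}N_k^\dagger N_\ell\sqrt{\rho_A}$. The remaining scalar inequality should follow from a standard operator inequality -- most likely Cauchy--Schwarz at the Hilbert--Schmidt level applied to a block operator with entries proportional to $A_k$, or equivalently the operator concavity of $X\mapsto X^{1/2}$.

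The principal obstacle is bridging the manifestly Kraus-dependent SW expression to the Kraus-invariant quantity $(\tr[M_B^{1/2}])^2$. Because the SW construction need not be optimal, the Kraus-level manipulation may introduce slack that must not spoil the inequality; passing through the Stinespring dilation $V\colon A\to BE$ (so that $\sigma_{RBE}=V\ketbra{\rho}{\rho}_{RA}V^\dagger$ is pure) or selecting a Kraus representation adapted to an eigenbasis of $\sigma_{RB}$ may be needed to make the match go through cleanly. The appearance of $\sigma_R^{-1}$ in the final formula is a signature of the Schatten-$2$ structure of $D_2$, and I expect Hilbert--Schmidt-norm expressions such as $\|\sigma_{RB}(\sigma_R^{1/2}\otimes\tau_B^{-1/2})\|_2^2$ to arise naturally as intermediate quantities.
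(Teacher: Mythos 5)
There is a genuine gap, and it occurs at the very first step: the decoder you analyze is not the SW decoder. The construction ``fix Kraus operators $N_k$ of $\mathcal{N}$, set $A_k\coloneqq N_k\rho_A^{1/2}$, and assemble the decoder from the polar decompositions $A_k=U_k\lvert A_k\rvert$'' describes a pretty-good-measurement/transpose-channel type recovery map, close in spirit to the Petz decoder of Section~\ref{sec:petz-decoder}, whereas the SW decoder of Appendix~\ref{app:sw-construction} is built from the Uhlmann unitary $U_{R'E'}$ obtained from a singular value decomposition of $M_{R'E'}$, whose entries are matrix elements of $\hat{\sigma}_{RE}^{1/2}\sigma_{RE}^{1/2}$ with $\hat{\sigma}_{RE}=\sigma_R\otimes\sigma_E$. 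This distinction is not cosmetic: Theorem~\ref{thm:petz} shows that the Petz-type decoder achieves $\widetilde{I}_2^\uparrow(\sigma_{RB}\|\sigma_R^{-1})$, which is a different quantity from $I_2^\downarrow(\sigma_{RB}\|\sigma_R^{-1})$ and, as the numerics of Section~\ref{sec:numerics} show, can be either larger or smaller than it. So even a complete argument for the decoder you describe would not establish~\eqref{eq:fidelity-sw}. Relatedly, your concluding step --- the ``standard operator inequality, most likely Cauchy--Schwarz \dots\ or operator concavity of $X\mapsto X^{1/2}$'' --- is precisely the nontrivial content of the theorem and is left unproven.

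For comparison, the paper's proof stays on the environment side: using the factorization $\mathcal{D}^{\mathrm{SW}}=\mathcal{D}''\circ\mathcal{D}'$ and monotonicity of the fidelity (under $\tr_E$ and under $\mathcal{D}''$), it shows $F_e(\rho_A,\mathcal{D}^{\mathrm{SW}}_{B\rightarrow A}\circ\mathcal{N}_{A\rightarrow B})\geq F^2(\sigma_R\otimes\sigma_E,\sigma_{RE})$ --- the overlap computation collapses to $\lVert\hat{\sigma}_{RE}^{1/2}\sigma_{RE}^{1/2}\rVert_1^2$ exactly because $U_{R'E'}$ is the SVD unitary --- and then invokes the duality $-\widetilde{I}_{1/2}^{\uparrow\uparrow}(R:E)_\sigma=I_2^\downarrow(\sigma_{RB}\|\sigma_R^{-1})$ for the pure state $\sigma_{RBE}$, plus an isometric-invariance argument when $d_R>\rank(\rho_A)$. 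Your evaluation of the right-hand side is correct and is the salvageable part of the proposal: $I_2^\downarrow(\sigma_{RB}\|\sigma_R^{-1})=\log(\tr[M_B^{1/2}])^2$ with $M_B=\tr_R[\sigma_{RB}^2(\sigma_R\otimes 1_B)]$, which by the same duality equals $\log F^2(\sigma_R\otimes\sigma_E,\sigma_{RE})$; so had you started from the correct decoder and derived the fidelity bound on $RE$, your closed form would let you finish without citing the duality lemma. As it stands, however, the proposal both targets the wrong map and omits the decisive inequality.
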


The following corollary follows from Theorem~\ref{thm:sw}, as proven in Appendix~\ref{app:sw-proof-cor}.
\begin{cor}[Entanglement fidelity of SW decoder]\label{cor:sw}
Let $(\rho_A,\mathcal{N}_{A\rightarrow B})\in \mathcal{S}(A)\times \CPTP(A,B)$. 
Let $\mathcal{D}^{\mathrm{SW}}_{B\rightarrow A}$ be the SW decoder for $(\rho_A,\mathcal{N}_{A\rightarrow B})$. 
Let $\ket{\rho}_{RA}\in RA$ be such that $\tr_R[\proj{\rho}_{RA}]=\rho_A$ and let 
$\sigma_{RB}\coloneqq \mathcal{N}_{A\rightarrow B}(\proj{\rho}_{RA})$. 
Then all of the following hold.
\begin{enumerate}[label=(\alph*)]
\item \emph{Duality:} 
Let $\ket{\sigma}_{RBE}\in RBE$ be such that $\tr_E[\proj{\sigma}_{RBE}]=\sigma_{RB}$ 
and let $\mathcal{N}_{A\rightarrow E}^c$ be a complementary channel to $\mathcal{N}_{A\rightarrow B}$. Then,
\begin{align}
\log F_e(\rho_A,\mathcal{D}^{\mathrm{SW}}_{B\rightarrow A}\circ\mathcal{N}_{A\rightarrow B})
\geq I_2^\downarrow(\sigma_{RB}\| \sigma_R^{-1})
&= -\widetilde{I}_{1/2}^{\uparrow\uparrow}(R:E)_\sigma
\label{eq:duality-sw}\\
&=-\widetilde{I}_{1/2}^{\uparrow\uparrow}(R:E)_{\mathcal{N}_{A\rightarrow E}^c(\proj{\rho}_{RA})}.
\label{eq:duality-sw2}
\end{align}
\item \emph{Weaker bound:} Let 
\begin{align}\label{eq:def-epsilon}
\varepsilon^\mathrm{SW}
\coloneqq -D(\sigma_{RB}\| \sigma_R^{-1}\otimes \sigma_B)
&=H(R)_{\sigma}-I(R\,\rangle\, B)_{\sigma}
=H(R|B)_{\sigma}-H(R|A)_{\rho}.
\end{align}
Then, 
\begin{align}\label{eq:sw-original}
\log F_e(\rho_A,\mathcal{D}^{\mathrm{SW}}_{B\rightarrow A}\circ\mathcal{N}_{A\rightarrow B})
\geq I_2^\downarrow(\sigma_{RB}\| \sigma_R^{-1})
&\geq -\varepsilon^{\mathrm{SW}}.
\end{align}
\end{enumerate}
\end{cor}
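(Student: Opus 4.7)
The first inequality in both~\eqref{eq:duality-sw} and~\eqref{eq:sw-original} is supplied by Theorem~\ref{thm:sw}, so the work lies in (a) establishing the duality with $-\widetilde{I}_{1/2}^{\uparrow\uparrow}(R:E)_\sigma$ together with the trivial second equality, and (b) deriving the alternative forms of $\varepsilon^{\mathrm{SW}}$ and the weaker bound.

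For part (a), my plan is to reduce both divergences to explicit fidelity-type quantities. Starting from $D_2(\rho\|\sigma)=\log\tr[\rho^2\sigma^{-1}]$, a cyclic trace rearrangement gives $\tr[\sigma_{RB}^2(\sigma_R\otimes\tau_B^{-1})]=\tr_B[N_B\tau_B^{-1}]$, where $N_B\coloneqq \tr_R[(\sigma_R^{1/2}\otimes 1_B)\sigma_{RB}^2(\sigma_R^{1/2}\otimes 1_B)]\geq 0$. A standard Cauchy--Schwarz / Lagrange-multiplier argument applied to the inner product $\langle X,Y\rangle=\tr[X^\dagger Y]$ then yields $\inf_{\tau_B\in \mathcal{S}(B)}\tr[N_B\tau_B^{-1}]=(\tr N_B^{1/2})^2$, attained at $\tau_B\propto N_B^{1/2}$, so $I_2^\downarrow(\sigma_{RB}\|\sigma_R^{-1})=2\log\tr N_B^{1/2}$. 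On the other side, $\widetilde{D}_{1/2}(\rho\|\sigma)=-2\log F(\rho,\sigma)$ is direct from the definitions of $\widetilde{D}_{1/2}$, the Schatten $1/2$-quasi-norm, and $F$, which gives $-\widetilde{I}_{1/2}^{\uparrow\uparrow}(R:E)_\sigma = 2\log F(\sigma_{RE},\sigma_R\otimes\sigma_E)$. The equality therefore reduces to the pure-state identity $\tr N_B^{1/2}=F(\sigma_{RE},\sigma_R\otimes\sigma_E)$. I would establish this via Uhlmann's theorem, writing the fidelity as the maximal overlap of $\ket{\sigma}_{RBE}$ with purifications of $\sigma_R\otimes\sigma_E$ on the purifying system $B$; after Schmidt-decomposing $\ket{\sigma}_{RBE}$ across the $R\mid BE$ cut and expressing $N_B$ in that basis, the optimal purification can be chosen so that its overlap with $\ket{\sigma}_{RBE}$ matches $\tr N_B^{1/2}$. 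This pure-state duality is the step I expect to be the main technical obstacle. The second equality~\eqref{eq:duality-sw2} is then immediate since $\sigma_{RE}=\mathcal{N}^c_{A\to E}(\proj{\rho}_{RA})$ and the right-hand side depends only on $\sigma_{RE}$.

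For part (b), the expressions for $\varepsilon^{\mathrm{SW}}$ follow from routine entropy manipulations. Expanding $D(\sigma_{RB}\|\sigma_R^{-1}\otimes\sigma_B)$ via $\log(\sigma_R^{-1}\otimes\sigma_B)=-\log\sigma_R\otimes 1+1\otimes\log\sigma_B$ gives $-H(R)_\sigma-H(R|B)_\sigma$; negating and using $I(R\,\rangle\, B)_\sigma=-H(R|B)_\sigma$ yields the middle form, and purity of $\ket{\rho}_{RA}$ (hence $H(R|A)_\rho=-H(R)_\rho=-H(R)_\sigma$ because $\rho_R=\sigma_R$) yields the final form. For the bound $I_2^\downarrow(\sigma_{RB}\|\sigma_R^{-1})\geq -\varepsilon^{\mathrm{SW}}$, I invoke monotonicity of $D_\alpha$ in the Rényi order ($D_2\geq D_1=D$), followed by the observation that $\inf_{\tau_B\in\mathcal{S}(B)}D(\sigma_{RB}\|\sigma_R^{-1}\otimes\tau_B)=D(\sigma_{RB}\|\sigma_R^{-1}\otimes\sigma_B)$ by Gibbs' inequality, since the $\tau_B$-dependent term $-\tr[\sigma_B\log\tau_B]$ is minimized over $\mathcal{S}(B)$ at $\tau_B=\sigma_B$. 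I do not anticipate serious obstacles in this half; it is essentially bookkeeping.
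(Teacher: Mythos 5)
Your proposal is correct in substance but takes a genuinely different route from the paper in both parts. For part (a), the paper simply invokes the known duality relation $I_2^\downarrow(\sigma_{RB}\|\sigma_R^{-1})=-\widetilde{I}_{1/2}^{\uparrow\uparrow}(R:E)_\sigma$ for pure tripartite states as a cited result, whereas you re-derive it from scratch: the reduction $I_2^\downarrow(\sigma_{RB}\|\sigma_R^{-1})=2\log\tr N_B^{1/2}$ via $\inf_{\tau_B}\tr[N_B\tau_B^{-1}]=(\tr N_B^{1/2})^2$ is sound, as is $\widetilde{D}_{1/2}(\rho\|\sigma)=-2\log F(\rho,\sigma)$, so everything hinges on the pure-state identity $\tr N_B^{1/2}=F(\sigma_{RE},\sigma_R\otimes\sigma_E)$, which you correctly identify as the crux but leave as a plan (Uhlmann plus a Schmidt decomposition across the $R\,|\,BE$ cut) rather than executing; that identity is true and your route is viable, but as written this is the one step that is not actually proved. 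A second, minor looseness: for~\eqref{eq:duality-sw2} you assert $\sigma_{RE}=\mathcal{N}^c_{A\to E}(\proj{\rho}_{RA})$, but the corollary allows an \emph{arbitrary} purification $\ket{\sigma}_{RBE}$ and an \emph{arbitrary} complementary channel, so one still needs the observation (made explicit in the paper via the Stinespring isometry, or via the fact that the dual quantity depends only on $\sigma_{RB}$) that all such choices give the same value. For part (b), your derivation of the alternative forms of $\varepsilon^{\mathrm{SW}}$ matches the paper's, but your proof of the weaker bound is genuinely different and arguably more elementary: you stay on the $B$ side, using monotonicity of the Petz divergence in the R\'enyi order together with Gibbs' inequality to identify $\tau_B=\sigma_B$ as the minimizer of $D(\sigma_{RB}\|\sigma_R^{-1}\otimes\tau_B)$, whereas the paper passes to the $E$ side, applies monotonicity of the \emph{sandwiched} divergence in $\alpha$, and returns via duality of the conditional entropy. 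Both arguments are valid; yours avoids the second duality invocation at the cost of needing the (standard) monotonicity of $D_\alpha$ in $\alpha$ for an unnormalized second argument.
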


\begin{rem}[Comparison with original SW lower bound]\label{rem:original}
Our bounds in Theorem~\ref{thm:sw} and Corollary~\ref{cor:sw} may be compared to the bounds derived in the original work~\cite{schumacher2001approximate} on the SW decoder. 
The original work contains a proof of the lower bound 
$F_e^{1/2}(\rho_A,\mathcal{D}_{B\rightarrow A}^{\mathrm{SW}}\circ\mathcal{N}_{A\rightarrow B}) \geq 1-(\frac{\ln (2)}{2}\varepsilon^\mathrm{SW})^{1/2}$. 
Since $2^{-\varepsilon/2} \geq 1-(\frac{\ln (2)}{2}\varepsilon)^{1/2}$ for all $\varepsilon\in [0,\infty)$, it follows that our bounds in~\eqref{eq:fidelity-sw} and~\eqref{eq:sw-original} are improved versions of the original lower bound. 
We note that our proof of~\eqref{eq:fidelity-sw} follows the same proof technique as~\cite{schumacher2001approximate}. 
The only significant qualitative difference between our proof and that of~\cite{schumacher2001approximate} is our use of a different duality relation, which was not known when~\cite{schumacher2001approximate} was written. 
More details can be found in Appendix~\ref{app:original}.
\end{rem}

\subsection{Petz decoder and twirled Petz decoder}\label{sec:petz-decoder}

This section examines the usefulness of the Petz map and the twirled Petz map for one-shot entanglement transmission. 
We begin by defining the two corresponding decoders.

\begin{defn}[Definition of Petz and twirled Petz decoder]
Let $(\rho_A,\mathcal{N}_{A\rightarrow B})\in \mathcal{S}(A)\times \CPTP(A,B)$. 
The \emph{Petz decoder for $(\rho_A,\mathcal{N}_{A\rightarrow B})$} is 
$\mathcal{D}^\mathrm{P}_{B\rightarrow A}
\coloneqq \mathcal{P}_{\rho_A,\mathcal{N}_{A\rightarrow B}}$, see~\eqref{eq:def_P}.
The \emph{twirled Petz decoder for $(\rho_A,\mathcal{N}_{A\rightarrow B})$} is 
$\mathcal{D}^\mathrm{P,twirled}_{B\rightarrow A}
\coloneqq \mathcal{R}_{\rho_A,\mathcal{N}_{A\rightarrow B}}
=\int_{-\infty}^{\infty} \mathrm{d}t\, \beta_0(t) \mathcal{R}_{\rho_A,\mathcal{N}_{A\rightarrow B}}^{t/2}$, see \eqref{eq:def_R}--\eqref{eq:def-beta0}.
\end{defn}

To assess the performance of the Petz and the twirled Petz decoder, it is useful to express their entanglement fidelity as a function of $\sigma_{RB}\coloneqq\mathcal{N}_{A\rightarrow B}(\proj{\rho}_{RA})$. 
This is achieved by the following theorem. 
The proof of Theorem~\ref{thm:petz} is given in Appendix~\ref{app:proof-thm-petz}.

\begin{thm}[Entanglement fidelity of Petz and twirled Petz decoder]\label{thm:petz}
Let $(\rho_A,\mathcal{N}_{A\rightarrow B})\in \mathcal{S}(A)\times \CPTP(A,B)$. 
Let $\mathcal{D}^\mathrm{P}_{B\rightarrow A}$ and $\mathcal{D}^{\mathrm{P,twirled}}_{B\rightarrow A}$ be the Petz and the twirled Petz decoder for $(\rho_A,\mathcal{N}_{A\rightarrow B})$, 
and let $\mathcal{D}^{\mathrm{P},t}_{B\rightarrow A}\coloneqq \mathcal{R}_{\rho_A,\mathcal{N}_{A\rightarrow B}}^{t/2}$. 
Let $\ket{\rho}_{RA}\in RA$ be such that $\tr_R[\proj{\rho}_{RA}]=\rho_A$ 
and let $\sigma_{RB}\coloneqq \mathcal{N}_{A\rightarrow B}(\proj{\rho}_{RA})$. 
Then 
\begin{align}
\log F_e(\rho_A,\mathcal{D}^{\mathrm{P,}t}_{B\rightarrow A}\circ\mathcal{N}_{A\rightarrow B})
&=\log \lVert \sigma_{RB}^{\frac{1}{2}} (\sigma_{R}^{\frac{1}{2}(1+it)}\otimes \sigma_{B}^{-\frac{1}{2}(1+it)}) \sigma_{RB}^{\frac{1}{2}} \rVert_2^2,
\label{eq:thm-t}
\\
\log F_e(\rho_A,\mathcal{D}^{\mathrm{P}}_{B\rightarrow A}\circ\mathcal{N}_{A\rightarrow B})
&=\log \lVert \sigma_{RB}^{\frac{1}{2}} (\sigma_{R}^{\frac{1}{2}}\otimes \sigma_{B}^{-\frac{1}{2}}) \sigma_{RB}^{\frac{1}{2}} \rVert_2^2
=\widetilde{I}_2^\uparrow(\sigma_{RB}\| \sigma_R^{-1}),
\label{eq:thm-petz}\\
\log F_e(\rho_A,\mathcal{D}^{\mathrm{P,twirled}}_{B\rightarrow A}\circ\mathcal{N}_{A\rightarrow B})
&=\log \int_{-\infty}^{\infty}\mathrm{d}t\, \beta_0(t) \lVert \sigma_{RB}^{\frac{1}{2}} (\sigma_{R}^{\frac{1}{2}(1+it)}\otimes \sigma_{B}^{-\frac{1}{2}(1+it)}) \sigma_{RB}^{\frac{1}{2}} \rVert_2^2.
\label{eq:thm-twirled}
\end{align}
\end{thm}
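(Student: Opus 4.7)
The plan is to prove~\eqref{eq:thm-t} by direct computation and then deduce~\eqref{eq:thm-petz} as the case $t=0$ and~\eqref{eq:thm-twirled} by linearity of $F_e$ in the decoder. Unfolding the definitions~\eqref{eq:def_R}--\eqref{eq:def_Rt} gives
$\mathcal{D}^{\mathrm{P},t}_{B\to A}(\sigma_{RB})=\rho_A^{(1-it)/2}\,\mathcal{N}_{A\to B}^\dagger\!\bigl(\sigma_B^{(-1+it)/2}\,\sigma_{RB}\,\sigma_B^{(-1-it)/2}\bigr)\,\rho_A^{(1+it)/2}.$
Writing $F_e=\tr[\mathcal{D}^{\mathrm{P},t}(\sigma_{RB})\proj{\rho}_{RA}]$, cycling the trace so the two $\rho_A^{(1\pm it)/2}$-factors sit on either side of $\proj{\rho}_{RA}$, and using the adjoint identity $\tr[\mathcal{N}^\dagger(Z)W]=\tr[Z\,\mathcal{N}(W)]$ yields
$F_e=\tr\!\bigl[\sigma_B^{(-1+it)/2}\sigma_{RB}\sigma_B^{(-1-it)/2}\,\mathcal{N}_{A\to B}\!\bigl(\rho_A^{(1+it)/2}\proj{\rho}_{RA}\rho_A^{(1-it)/2}\bigr)\bigr].$

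The decisive step is the ``transpose'' identity $\rho_A^{s}\ket{\rho}_{RA}=\sigma_R^{s}\ket{\rho}_{RA}$ valid for every $s\in\mathbb{C}$ in the support, which is immediate from the Schmidt decomposition $\ket{\rho}_{RA}=\sum_i\sqrt{\lambda_i}\ket{i}_R\ket{i}_A$; taking adjoints gives the matching identity $\bra{\rho}_{RA}\rho_A^{s}=\bra{\rho}_{RA}\sigma_R^{s}$. Applied to both sides of $\proj{\rho}_{RA}$, this replaces the $A$-side powers of $\rho_A$ by $R$-side powers of $\sigma_R$, which commute trivially with $\mathcal{N}_{A\to B}$, so $\mathcal{N}$ only acts on $\proj{\rho}_{RA}$ and produces $\sigma_{RB}$. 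Collecting the resulting $\sigma_R$- and $\sigma_B$-factors into $M\coloneqq\sigma_R^{(1+it)/2}\otimes\sigma_B^{-(1+it)/2}$ and one last cyclic permutation of the trace give $F_e=\tr[M^\dagger\sigma_{RB}M\sigma_{RB}]=\lVert\sigma_{RB}^{1/2}M\sigma_{RB}^{1/2}\rVert_2^2$, which is precisely~\eqref{eq:thm-t}.

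Setting $t=0$ in~\eqref{eq:thm-t} yields the first equality of~\eqref{eq:thm-petz}. Its second equality is an algebraic verification from the definition of the sandwiched divergence at $\alpha=2$: with $\sigma\coloneqq\sigma_R^{-1}\otimes\sigma_B$ one has $\sigma^{-1/4}=\sigma_R^{1/4}\otimes\sigma_B^{-1/4}$, so $\widetilde{D}_2(\sigma_{RB}\|\sigma)=\log\tr[(\sigma_R^{1/2}\otimes\sigma_B^{-1/2})\sigma_{RB}(\sigma_R^{1/2}\otimes\sigma_B^{-1/2})\sigma_{RB}]$, which matches the Hilbert--Schmidt expression in~\eqref{eq:thm-petz} by one further cyclic permutation. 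Equation~\eqref{eq:thm-twirled} follows from~\eqref{eq:thm-t} together with linearity of $F_e$ in the decoder and Fubini's theorem; absolute integrability is guaranteed because the integrand is uniformly bounded in $t$ and $\beta_0\in L^1(\mathbb{R})$.

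The only non-routine element is careful bookkeeping of the complex operator powers: $\rho_A^{(1-it)/2}$ and $\rho_A^{(1+it)/2}$ are mutual adjoints but neither is self-adjoint, so the transpose identity must be applied separately on the bra- and ket-sides and the pieces reassembled into $M$ and $M^\dagger$. Support issues do not intervene because all inverse powers live on the supports of $\sigma_B$ and $\sigma_R$, on which the Schmidt identity holds; once the substitution $\rho_A\mapsto\sigma_R$ on the purification is accepted, the remainder is just cyclic trace manipulation and recognition of $\lVert\cdot\rVert_2^2$.
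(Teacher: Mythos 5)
Your proposal is correct and follows essentially the same route as the paper's proof: unfold the definition of $\mathcal{R}^{t/2}_{\rho_A,\mathcal{N}}$, use the adjoint identity $\tr[\mathcal{N}^\dagger(Z)W]=\tr[Z\,\mathcal{N}(W)]$ together with the purification ("transpose") identity $\rho_A^{s}\ket{\rho}_{RA}=\rho_R^{s}\ket{\rho}_{RA}$ to move all operator powers onto $R$ and $B$, and recognize the resulting trace as $\lVert\sigma_{RB}^{1/2}M\sigma_{RB}^{1/2}\rVert_2^2$; the cases \eqref{eq:thm-petz} and \eqref{eq:thm-twirled} then follow by setting $t=0$ and by linearity of $F_e$ in the decoder, exactly as in Appendix~\ref{app:proof-thm-petz}.
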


The following corollary follows from Theorem~\ref{thm:petz}, as proven in Appendix~\ref{app:proof-cor-petz}.
\begin{cor}[Entanglement fidelity of Petz and twirled Petz decoder]\label{cor:petz}
Let $(\rho_A,\mathcal{N}_{A\rightarrow B})\in \mathcal{S}(A)\times \CPTP(A,B)$. 
Let $\mathcal{D}^\mathrm{P}_{B\rightarrow A}$ and $\mathcal{D}^{\mathrm{P,twirled}}_{B\rightarrow A}$ be the Petz and the twirled Petz decoder for $(\rho_A,\mathcal{N}_{A\rightarrow B})$, 
and let $\mathcal{D}^{\mathrm{P},t}_{B\rightarrow A}\coloneqq \mathcal{R}_{\rho_A,\mathcal{N}_{A\rightarrow B}}^{t/2}$. 
Let $\ket{\rho}_{RA}\in RA$ be such that $\tr_R[\proj{\rho}_{RA}]=\rho_A$ 
and let $\sigma_{RB}\coloneqq \mathcal{N}_{A\rightarrow B}(\proj{\rho}_{RA})$. 
Then all of the following hold.
\begin{enumerate}[label=(\alph*)]
\item \emph{Duality for Petz decoder:} 
Let $\ket{\sigma}_{RBE}\in RBE$ be such that $\tr_E[\proj{\sigma}_{RBE}]=\sigma_{RB}$ and 
let $\mathcal{N}_{A\rightarrow E}^c$ be a complementary channel to $\mathcal{N}_{A\rightarrow B}$. Then, 
\begin{align}
\log F_e(\rho_A,\mathcal{D}^{\mathrm{P}}_{B\rightarrow A}\circ\mathcal{N}_{A\rightarrow B})
=\widetilde{I}_2^\uparrow(\sigma_{RB}\| \sigma_R^{-1})
&=-I_{1/2}^{\uparrow\downarrow}(R:E)_\sigma
\label{eq:thm-duality0}\\
&=-I_{1/2}^{\uparrow\downarrow}(R:E)_{\mathcal{N}_{A\rightarrow E}^c(\proj{\rho}_{RA})}.
\label{eq:thm-duality}
\end{align}
\item \emph{Optimality of $t=0$:} 
$\mathcal{D}^{\mathrm{P}}_{B\rightarrow A}=\mathcal{D}^{\mathrm{P},0}_{B\rightarrow A}$ and for all $t\in \mathbb{R}$
\begin{align}
F_e(\rho_A,\mathcal{D}^{\mathrm{P},0}_{B\rightarrow A}\circ\mathcal{N}_{A\rightarrow B})
\geq 
F_e(\rho_A,\mathcal{D}^{\mathrm{P},t}_{B\rightarrow A}\circ\mathcal{N}_{A\rightarrow B}).
\end{align}
\item \emph{Petz decoder outperforms twirled Petz decoder:} 
\begin{align}
\log F_e(\rho_A,\mathcal{D}^{\mathrm{P}}_{B\rightarrow A}\circ\mathcal{N}_{A\rightarrow B})
&\geq \log F_e(\rho_A,\mathcal{D}^{\mathrm{P,twirled}}_{B\rightarrow A}\circ\mathcal{N}_{A\rightarrow B})
\label{eq:thm-bound1}\\
&\geq D(\sigma_{RB}\| \sigma_R^{-1}\otimes \sigma_B).
\label{eq:thm-bound3}
\end{align}
\end{enumerate}
\end{cor}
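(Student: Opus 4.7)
The plan is to derive all three parts of Corollary~\ref{cor:petz} from Theorem~\ref{thm:petz}, which expresses the entanglement fidelities of $\mathcal{D}^{\mathrm{P},t}$, $\mathcal{D}^{\mathrm{P}}$, and $\mathcal{D}^{\mathrm{P,twirled}}$ as (integrals of) squared Schatten $2$-norms built from $\sigma_{RB}$, $\sigma_R$, and $\sigma_B$. For part~(a), the first equality in~\eqref{eq:thm-duality0} is already contained in~\eqref{eq:thm-petz}, so what remains is to prove the duality $\widetilde{I}_2^\uparrow(\sigma_{RB}\|\sigma_R^{-1}) = -I_{1/2}^{\uparrow\downarrow}(R:E)_\sigma$ on any purification $\ket{\sigma}_{RBE}$ of $\sigma_{RB}$. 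By the definitions in Section~\ref{ssec:entropies}, the left-hand side is $\log\|(\sigma_R^{1/4}\otimes\sigma_B^{-1/4})\sigma_{RB}(\sigma_R^{1/4}\otimes\sigma_B^{-1/4})\|_2^2$ and the right-hand side equals $2\log\sup_{\tau_E}\tr[\sigma_{RE}^{1/2}(\sigma_R^{1/2}\otimes\tau_E^{1/2})]$; the plan is to invoke the Petz--sandwiched R\'enyi duality on pure tripartite states (as discussed in~\cite{burri2024doublyminimizedpetzrenyi}) or, if a direct reference is not available, to verify the identity by explicit computation in a Schmidt basis of $\ket{\sigma}_{RBE}$ across the $B\,|\,RE$ cut. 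Equation~\eqref{eq:thm-duality} then follows since purifications of $\sigma_{RB}$ are unique up to isometries on $E$, and $\sigma_{RE}$ coincides with $(\mathcal{I}_R\otimes\mathcal{N}_{A\rightarrow E}^c)(\proj{\rho}_{RA})$ up to such an isometry.

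For part~(b), set $K\coloneqq\tfrac{1}{2}(\ln\sigma_R\otimes 1 - 1\otimes\ln\sigma_B)$ (Hermitian on the supports of $\sigma_R$ and $\sigma_B$), so that $\sigma_R^{(1+it)/2}\otimes\sigma_B^{-(1+it)/2} = e^{(1+it)K}$. Expanding the squared Schatten $2$-norm in~\eqref{eq:thm-t} as a trace and using cyclicity yields
\begin{align*}
\|\sigma_{RB}^{1/2}e^{(1+it)K}\sigma_{RB}^{1/2}\|_2^2 = \tr[e^{(1-it)K}\sigma_{RB}e^{(1+it)K}\sigma_{RB}].
\end{align*}
Diagonalizing $K = \sum_j \mu_j\proj{j}$ and setting $w_{jj'}\coloneqq |\bra{j}\sigma_{RB}\ket{j'}|^2\geq 0$ (symmetric in $j,j'$ since $\sigma_{RB}$ is Hermitian), the trace becomes $\sum_{j,j'}w_{jj'}e^{\mu_j+\mu_{j'}}e^{it(\mu_{j'}-\mu_j)}$. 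Symmetrizing $(j,j')\leftrightarrow(j',j)$ replaces the phase factor by $\cos(t(\mu_j-\mu_{j'}))\leq 1$, so the expression is maximized at $t=0$, proving part~(b).

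For part~(c), the first inequality is immediate from part~(b) combined with~\eqref{eq:thm-twirled}: integrating $F_e(\rho_A,\mathcal{D}^{\mathrm{P},t}\circ\mathcal{N})\leq F_e(\rho_A,\mathcal{D}^{\mathrm{P},0}\circ\mathcal{N})$ against the probability density $\beta_0$ yields $F_e(\rho_A,\mathcal{D}^{\mathrm{P,twirled}}\circ\mathcal{N})\leq F_e(\rho_A,\mathcal{D}^{\mathrm{P}}\circ\mathcal{N})$. The second inequality follows by applying the twirled Petz recovery bound~\eqref{eq:d-recovery} with $\rho = \proj{\rho}_{RA}$, reference state $\sigma = \pi_R\otimes\rho_A$ (where $\pi_R = 1_R/d_R$), and channel $\mathcal{I}_R\otimes\mathcal{N}_{A\rightarrow B}$. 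A direct check shows that the associated twirled Petz map factors as $\mathcal{I}_R\otimes\mathcal{D}^{\mathrm{P,twirled}}_{B\rightarrow A}$ (the $d_R$ prefactors cancel in the formulas~\eqref{eq:def_R}--\eqref{eq:def_Rt}), so the right-hand side of~\eqref{eq:d-recovery} becomes $-\log F_e(\rho_A,\mathcal{D}^{\mathrm{P,twirled}}\circ\mathcal{N})$; meanwhile $D(\proj{\rho}_{RA}\|\pi_R\otimes\rho_A) - D(\sigma_{RB}\|\pi_R\otimes\sigma_B)$ simplifies (again the $\log d_R$ terms cancel) to $H(R)_\sigma + H(RB)_\sigma - H(B)_\sigma = -D(\sigma_{RB}\|\sigma_R^{-1}\otimes\sigma_B)$, as required. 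The main obstacle is the duality in part~(a): the reference operator $\sigma_R^{-1}$ is not a normalized state, which makes the identity look somewhat non-standard and requires careful bookkeeping of supports if it is established directly rather than by citation.
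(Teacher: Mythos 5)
Your proposal is correct and follows essentially the same route as the paper: part~(a) by citing the Petz--sandwiched duality and isometric invariance of purifications, part~(b) by diagonalizing the relevant positive operator and observing that the oscillating phases can only decrease a sum of non-negative terms (this is exactly the paper's Lemma on $\tr[XY^{s+it}XY^{s-it}]\leq\tr[XY^sXY^s]$, rewritten with $e^{(1\pm it)K}$), and part~(c) by integrating the pointwise bound against $\beta_0$ and applying the twirled-Petz inequality~\eqref{eq:d-recovery} to a product reference state on $RA$. The only cosmetic difference is your choice of $\pi_R$ rather than $\rho_R$ as the reference on $R$ in part~(c); since the $\tau_R$-dependent terms cancel in the relative-entropy difference, both choices yield the same bound.
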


\begin{rem}[Comparison with SW lower bound]\label{rem:sw-petz}
The lower bound~\eqref{eq:thm-bound3} can be expressed as 
$\log F_e(\rho_A,\mathcal{D}^{\mathrm{P,twirled}}_{B\rightarrow A}\circ\mathcal{N}_{A\rightarrow B})\geq -\varepsilon^{\mathrm{SW}}$ where $\varepsilon^{\mathrm{SW}}\coloneqq -D(\sigma_{RB}\| \sigma_R^{-1}\otimes \sigma_B)$. 
This lower bound also applies to the SW decoder, see Corollary~\ref{cor:sw}~(b).
\end{rem}
\begin{rem}[Petz decoder is near-optimal] \label{rem:optimal}
For any fixed $(\rho_A,\mathcal{N}_{A\rightarrow B})\in \mathcal{S}(A)\times \CPTP(A,B)$, 
\begin{align}
\max_{\mathcal{D}_{B\rightarrow A}\in \CPTP(B,A)} F_e^2(\rho_A,\mathcal{D}_{B\rightarrow A}\circ \mathcal{N}_{A\rightarrow B})
\leq F_e(\rho_A,\mathcal{D}_{B\rightarrow A}^\mathrm{P}\circ \mathcal{N}_{A\rightarrow B})
\label{eq:barnum-knill}
\end{align}
where $\mathcal{D}_{B\rightarrow A}^\mathrm{P}$ denotes the Petz decoder for $(\rho_A,\mathcal{N}_{A\rightarrow B})$~\cite[Corollary~2]{barnum2002reversing} (see also~\cite[Theorem~3]{ng2010simple}). 
\eqref{eq:barnum-knill} implies that the Petz decoder performs nearly as well as an optimal decoder. 
Conversely,~\eqref{eq:barnum-knill} provides a useful upper bound on the performance of an optimal decoder, as its right-hand side admits a closed-form expression in terms of $\sigma_{RB}$, as shown in~\eqref{eq:thm-petz} in Theorem~\ref{thm:petz}.
\end{rem}

As mentioned in the introduction, the SW decoder achieves perfect recovery whenever possible~\cite{schumacher1996quantum,schumacher2001approximate}.  
Using Theorem~\ref{thm:petz}, one can similarly show that the Petz and the twirled Petz decoder also achieve perfect recovery whenever possible. 
The following corollary formalizes this claim and provides additional equivalence conditions for perfect recovery. 
A proof of Corollary~\ref{cor:perfect} is given in Appendix~\ref{app:corollary}. 
We note that the equivalence of~(b), (e), and~(f) follows from previous work~\cite{schumacher1996quantum,schumacher2001approximate}.

\begin{cor}[Perfect one-shot entanglement transmission]\label{cor:perfect}
Let $(\rho_A,\mathcal{N}_{A\rightarrow B})\in \mathcal{S}(A)\times \CPTP(A,B)$. 
Let $\ket{\rho}_{RA}\in RA$ be such that $\tr_R[\proj{\rho}_{RA}]=\rho_A$ 
and let $\sigma_{RB}\coloneqq \mathcal{N}_{A\rightarrow B}(\proj{\rho}_{RA})$. 
Then the following assertions are equivalent.
\begin{enumerate}[label=(\alph*)]
\item \emph{Achievability of perfect recovery:} \\
There exists $\mathcal{D}_{B\rightarrow A}\in \CPTP(B,A)$ such that 
$F_e(\rho_A,\mathcal{D}_{B\rightarrow A}\circ\mathcal{N}_{A\rightarrow B})
=1$.
\item \emph{SW decoder achieves perfect recovery:} \\
The SW decoder $\mathcal{D}_{B\rightarrow A}^{\mathrm{SW}}$ for $(\rho_A,\mathcal{N}_{A\rightarrow B})$ is such that 
$F_e(\rho_A,\mathcal{D}_{B\rightarrow A}^{\mathrm{SW}}\circ\mathcal{N}_{A\rightarrow B})
=1$.
\item \emph{Petz decoder achieves perfect recovery:} \\
The Petz decoder $\mathcal{D}_{B\rightarrow A}^{\mathrm{P}}$ for $(\rho_A,\mathcal{N}_{A\rightarrow B})$ is such that 
$F_e(\rho_A,\mathcal{D}_{B\rightarrow A}^{\mathrm{P}}\circ\mathcal{N}_{A\rightarrow B})
=1$.
\item \emph{Twirled Petz decoder achieves perfect recovery:} \\
The twirled Petz decoder $\mathcal{D}_{B\rightarrow A}^{\mathrm{P,twirled}}$ for $(\rho_A,\mathcal{N}_{A\rightarrow B})$ is such that 
$F_e(\rho_A,\mathcal{D}_{B\rightarrow A}^{\mathrm{P,twirled}}\circ\mathcal{N}_{A\rightarrow B})
=1$.
\item \emph{Saturation of data processing inequality for conditional entropy:} \\
$H(R|B)_{\sigma}=H(R|A)_{\rho}$.
\item \emph{Decoupling of $R$ from purifying system:} \\
If $\ket{\sigma}_{RBE}\in RBE$ is such that $\tr_E[\proj{\sigma}_{RBE}]=\sigma_{RB}$, then $\sigma_{RE}=\sigma_R\otimes \sigma_E$.
\end{enumerate}
\end{cor}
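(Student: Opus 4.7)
The plan is to establish a cycle of implications centered on assertion~(e). First I would prove the equivalence (e)~$\Leftrightarrow$~(f) by reducing both sides to the decoupling condition $I(R:E)_\sigma=0$. Since $\ket{\rho}_{RA}$ is pure we have $H(R|A)_\rho=-H(R)_\rho$, while purity of $\ket{\sigma}_{RBE}$ yields $H(RB)_\sigma=H(E)_\sigma$ and $H(B)_\sigma=H(RE)_\sigma$, hence $H(R|B)_\sigma=-H(R|E)_\sigma$. Because $\mathcal{N}_{A\rightarrow B}$ is trace preserving, $\sigma_R=\rho_R$, so (e) becomes $H(R)_\sigma=H(R|E)_\sigma$, i.e.\ $I(R:E)_\sigma=0$; the latter is equivalent to $\sigma_{RE}=\sigma_R\otimes\sigma_E$.

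For (a)~$\Rightarrow$~(e): if any $\mathcal{D}_{B\rightarrow A}\in \CPTP(B,A)$ achieves $F_e(\rho_A,\mathcal{D}_{B\rightarrow A}\circ\mathcal{N}_{A\rightarrow B})=1$, then since $\proj{\rho}_{RA}$ is pure, definition~\eqref{eq:def-ef} forces $\mathcal{D}_{B\rightarrow A}\circ\mathcal{N}_{A\rightarrow B}(\proj{\rho}_{RA})=\proj{\rho}_{RA}$. The data processing inequality for the conditional entropy, applied to the conditioning system, then gives $H(R|A)_\rho\leq H(R|B)_\sigma$ via $\mathcal{N}_{A\rightarrow B}$ and the reverse inequality via $\mathcal{D}_{B\rightarrow A}$, yielding (e).

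For (e)~$\Rightarrow$~(b): condition~(e) makes $\varepsilon^{\mathrm{SW}}=0$ by~\eqref{eq:def-epsilon}, so Corollary~\ref{cor:sw}(b) yields $\log F_e(\rho_A,\mathcal{D}^{\mathrm{SW}}_{B\rightarrow A}\circ\mathcal{N}_{A\rightarrow B})\geq 0$, and combined with $F_e\leq 1$ this gives equality. For (e)~$\Rightarrow$~(c) and (e)~$\Rightarrow$~(d), the chain in Corollary~\ref{cor:petz}(c) gives
\[
\log F_e(\rho_A,\mathcal{D}^{\mathrm{P}}_{B\rightarrow A}\circ\mathcal{N}_{A\rightarrow B})\geq \log F_e(\rho_A,\mathcal{D}^{\mathrm{P,twirled}}_{B\rightarrow A}\circ\mathcal{N}_{A\rightarrow B})\geq D(\sigma_{RB}\|\sigma_R^{-1}\otimes \sigma_B)=-\varepsilon^{\mathrm{SW}}=0,
\]
again forcing both entanglement fidelities to equal $1$. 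Finally, each of (b), (c), (d) trivially implies (a) by exhibiting the named decoder, closing the cycle.

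No step presents a substantial obstacle. The most delicate point is tracking the correct direction of the data processing inequality for the conditional entropy on both legs of the (a)~$\Rightarrow$~(e) argument, and executing the purification identities in (e)~$\Leftrightarrow$~(f) so that the reduction to decoupling is transparent; everything else reduces to invoking the quantitative bounds already proved in Corollaries~\ref{cor:sw} and~\ref{cor:petz}.
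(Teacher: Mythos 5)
Your proposal is correct and follows essentially the same route as the paper: the equivalence of (e) and (f) via duality of the conditional entropy and purity of $\ket{\rho}_{RA}$, the implication from (a) via the data processing inequality for the conditional entropy (the paper phrases this as (a)~$\Rightarrow$~(f) by showing $I(R:E)_\sigma=0$, but the ingredients are identical), and (e)~$\Rightarrow$~(b),(c),(d) by combining $\varepsilon^{\mathrm{SW}}=0$ with the quantitative lower bounds of Corollaries~\ref{cor:sw} and~\ref{cor:petz}. The only difference is the ordering of the implications in the cycle, which is immaterial.
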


\begin{rem}[Optimality of Petz decoder] 
For the case where $\rho_A$ is a state whose spectrum contains exactly one non-zero element (e.g., the maximally mixed state), the optimality of the Petz decoder was recently studied in~\cite{li2024optimalityconditiontransposechannel}. 
They prove a necessary and sufficient condition for the Petz decoder to be optimal in terms of Kraus operators of $\mathcal{N}_{A\rightarrow B}$~\cite[Theorem~1]{li2024optimalityconditiontransposechannel}, and this equivalence holds regardless of whether perfect recovery is achievable.
\end{rem}

\subsection{Numerical comparison of decoders}\label{sec:numerics}
The preceding sections have focused on the performance of the SW, the Petz, and the twirled Petz decoder. 
In particular, we have shown that all three are universal decoders that achieve perfect recovery whenever possible (Corollary~\ref{cor:perfect}). 
Naturally, this raises the question of how these decoders compare when perfect recovery is unattainable. 
Does one of them generally perform better than the other two? 
By Corollary~\ref{cor:petz}, the Petz decoder generally outperforms the twirled Petz decoder. 
However, the analytical results presented in the previous sections do not address the question of how the performance of the Petz decoder compares to the (lower bound on the) performance of the SW decoder. 
We therefore ask the following questions.

\begin{question}\label{question1}
Which performs better: the SW decoder or the Petz decoder?
\end{question}

\begin{question}\label{question2}
How does the lower bound on the performance of the SW decoder in~\eqref{eq:duality-sw2} compare to the  performance of the Petz decoder, as expressed in~\eqref{eq:thm-duality}? 

Equivalently, how does 
$\widetilde{I}_{1/2}^{\uparrow\uparrow}(R:E)_{\mathcal{N}_{A\rightarrow E}^c(\proj{\rho}_{RA})}$ compare to 
$I_{1/2}^{\uparrow\downarrow}(R:E)_{\mathcal{N}_{A\rightarrow E}^c(\proj{\rho}_{RA})}$?
\end{question}

These questions will now be addressed for three settings, each corresponding to a different choice of $(\rho,\mathcal{N})$ and involving qubit systems with orthonormal basis $\{\ket{0},\ket{1}\}$:
\begin{itemize}
\item \emph{3-qubit bit-flip code, bit-flip channel.} 
The noisy channel is given by 
$\mathcal{N}\coloneqq\mathcal{M}^{\otimes 3}$ where $\mathcal{M}$ is the bit-flip channel defined as $\mathcal{M}(L)\coloneqq (1-p)L + pX L X$ for all linear operators $L$ acting on a qubit, where $X$ denotes the Pauli-$X$ gate defined as $X\coloneqq \ketbra{1}{0}+\ketbra{0}{1}$, and $p\in [0,1]$ is a fixed parameter. 
The initial state is defined as 
$\rho\coloneqq \frac{1}{2}(\proj{0_L}+\proj{1_L})$ where $\ket{0_L}\coloneqq \ket{000},\ket{1_L}\coloneqq \ket{111}$. 
\item \emph{LNCY 4-qubit code, amplitude damping channel.} 
The noisy channel is given by 
$\mathcal{N}\coloneqq\mathcal{M}^{\otimes 4}$ where $\mathcal{M}$ is the amplitude damping channel defined as
$\mathcal{M}(L)\coloneqq \sum_{i=0}^1K_iL K_i^\dagger $ for all linear operators $L$ acting on a qubit, 
where $K_0\coloneqq \proj{0}+\sqrt{1-p}\proj{1}$, $K_1\coloneqq \sqrt{p}\ketbra{0}{1}$, 
and $p\in [0,1]$ is a fixed parameter.  
The initial state is defined as 
$\rho\coloneqq \frac{1}{2}(\proj{0_L}+\proj{1_L})$ where $\ket{0_L}\coloneqq \frac{1}{\sqrt{2}}( \ket{0000}+\ket{1111}),\ket{1_L}\coloneqq \frac{1}{\sqrt{2}}( \ket{0011}+\ket{1100})$. 
This is the encoding originally proposed in~\cite{leung1997approximate}. 
\item \emph{5-qubit code, amplitude damping channel.} 
The noisy channel is given by 
$\mathcal{N}\coloneqq\mathcal{M}^{\otimes 5}$ where $\mathcal{M}$ is again the amplitude damping channel. 
The initial state is defined as 
$\rho\coloneqq \frac{1}{2}(\proj{0_L}+\proj{1_L})$ where $\ket{0_L}$ is defined as in~\cite[Eq.~(54)--(56)]{devitt2013quantum} (see also the original work~\cite{laflamme1996perfect}) and $\ket{1_L}\coloneqq X^{\otimes 5}\ket{0_L}$, where $X$ denotes the Pauli-$X$ gate. 
\end{itemize}

For each of these settings, the performance of the SW, the Petz, and the twirled Petz decoder has been computed as measured by the entanglement fidelity. 
The results are presented in Figure~\ref{fig:numerics}, where $F_e(\rho,\mathcal{D}\circ \mathcal{N})$ is plotted for $\mathcal{D}$ being the \emph{SW decoder} (green), the \emph{Petz decoder} (orange), or the \emph{twirled Petz decoder} (red, dashed). 
For illustrative purposes, the following additional curves are plotted in Figure~\ref{fig:numerics}: 
The \emph{upper bound} (black) is a plot of $F_e(\rho,\mathcal{D}^{\mathrm{P}}\circ \mathcal{N})^{1/2}$, which is an upper bound on the entanglement fidelity of an optimal decoder, see Remark~\ref{rem:optimal}. 
The curve labelled as \emph{optimal decoder} (blue) represents the entanglement fidelity of an optimal decoder, i.e., $\max_{\mathcal{D}} F_e(\rho,\mathcal{D}\circ \mathcal{N})$. 
This expression has been computed by solving the corresponding semidefinite program~\cite[Section~IV]{fletcher2007optimum}. 
To solve this semidefinite program for the 5-qubit code, we have taken advantage of the fact that the quantity of interest can be computed via a semidefinite program in a lower-dimensional setting~\cite[Section~V]{fletcher2007optimum}. 
The \emph{lower bound SW} (black, dashed) curve is a plot of $2^{I_2^{\downarrow}(\sigma_{RB}\| \sigma_R^{-1})}$, which is a lower bound on $F_e(\rho,\mathcal{D}^{\mathrm{SW}}\circ\mathcal{N})$, see~\eqref{eq:fidelity-sw}. 
The \emph{lower bound twirled Petz} (black, dotted) curve is a plot of $2^{D(\sigma_{RB}\| \sigma_R^{-1}\otimes \sigma_B)}$, which is a lower bound on $F_e(\rho,\mathcal{D}^{\mathrm{P,twirled}}\circ\mathcal{N})$, see~\eqref{eq:thm-bound3}.
The \emph{no decoder} (purple) curve is a plot of $F_e(\rho,\mathcal{D}\circ\mathcal{N})$ where $\mathcal{D}$ is taken to be the identity channel.

\begin{figure}
\includegraphics[height=0.23\textwidth]{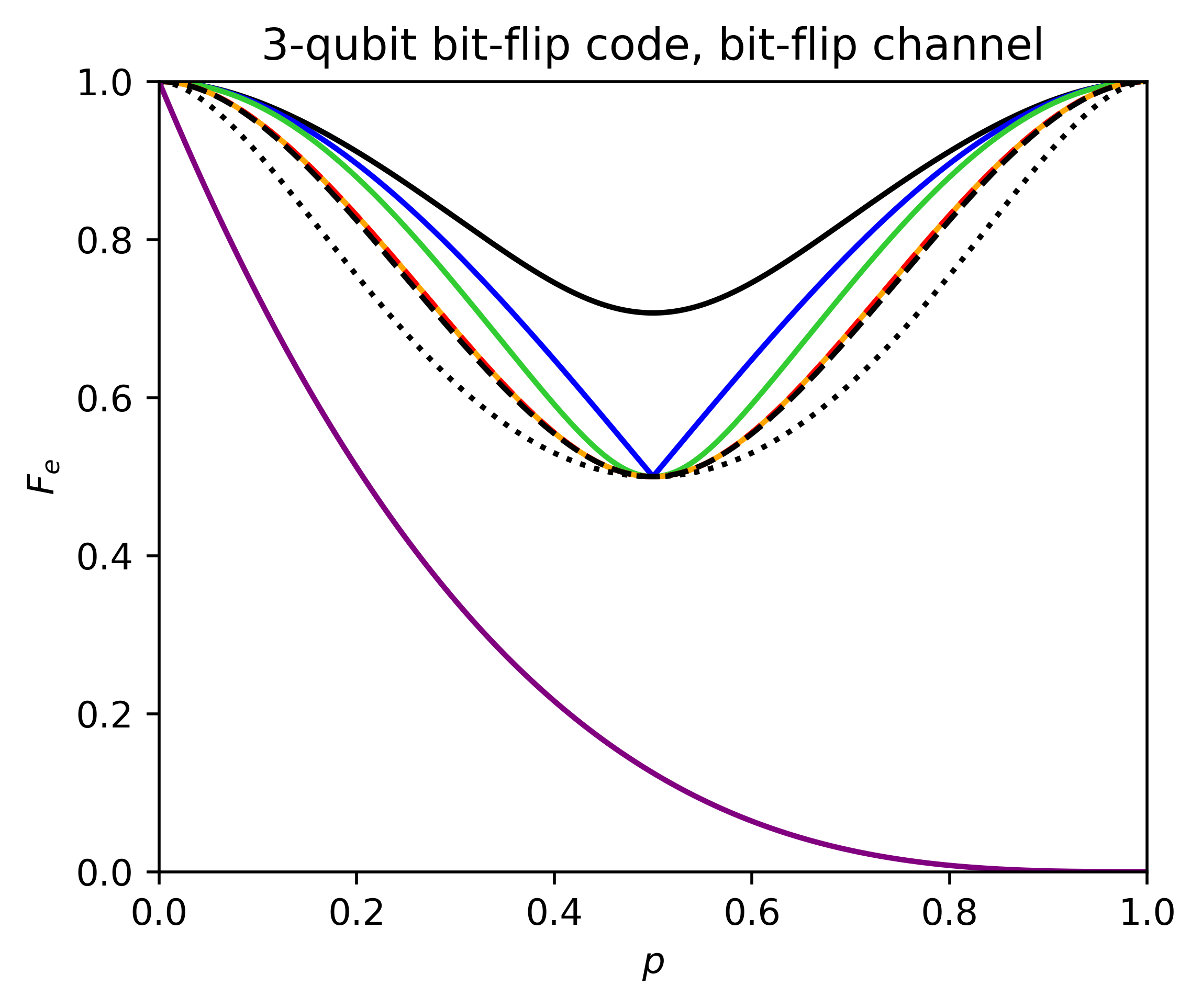}
\includegraphics[height=0.23\textwidth]{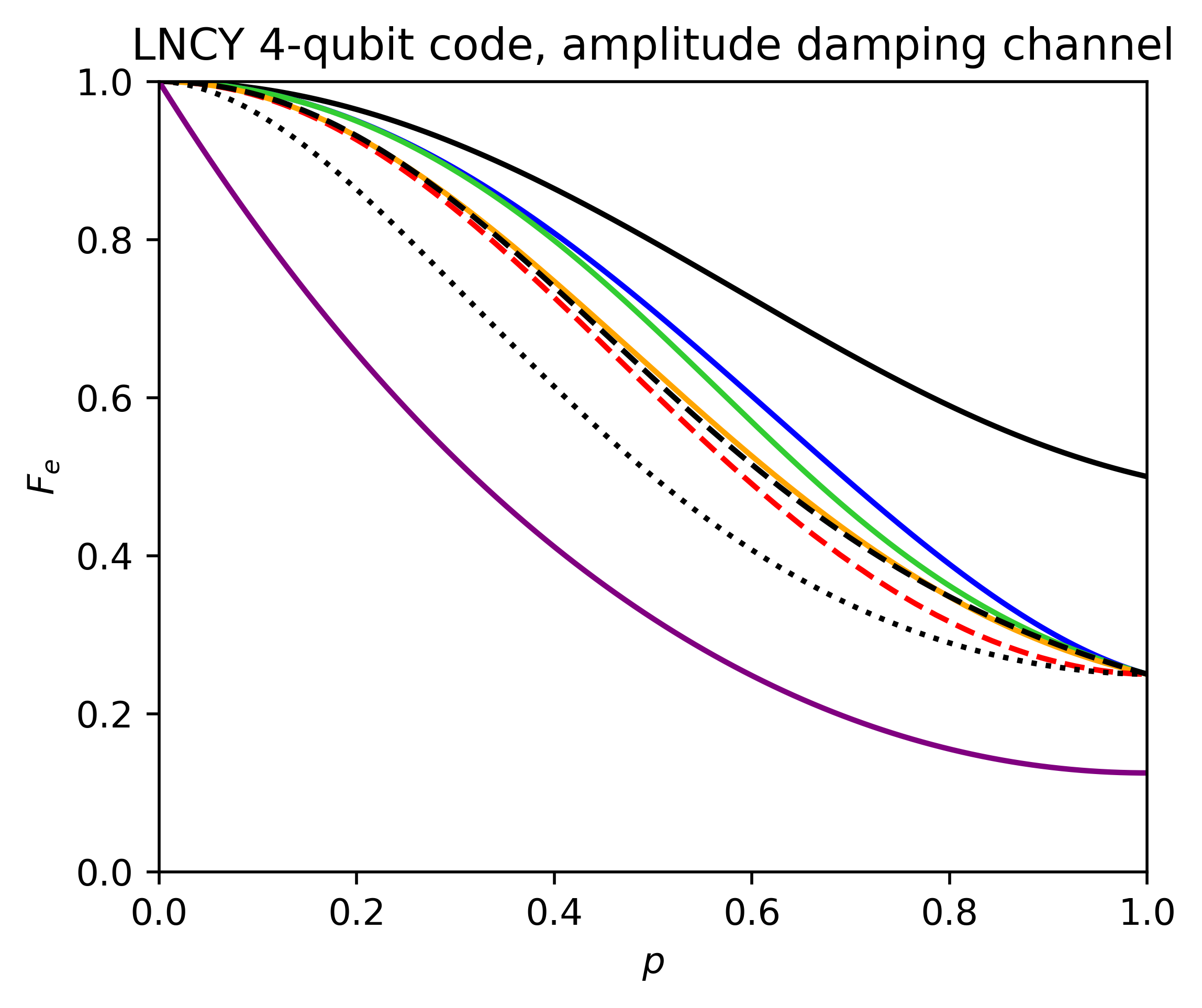}
\includegraphics[height=0.23\textwidth]{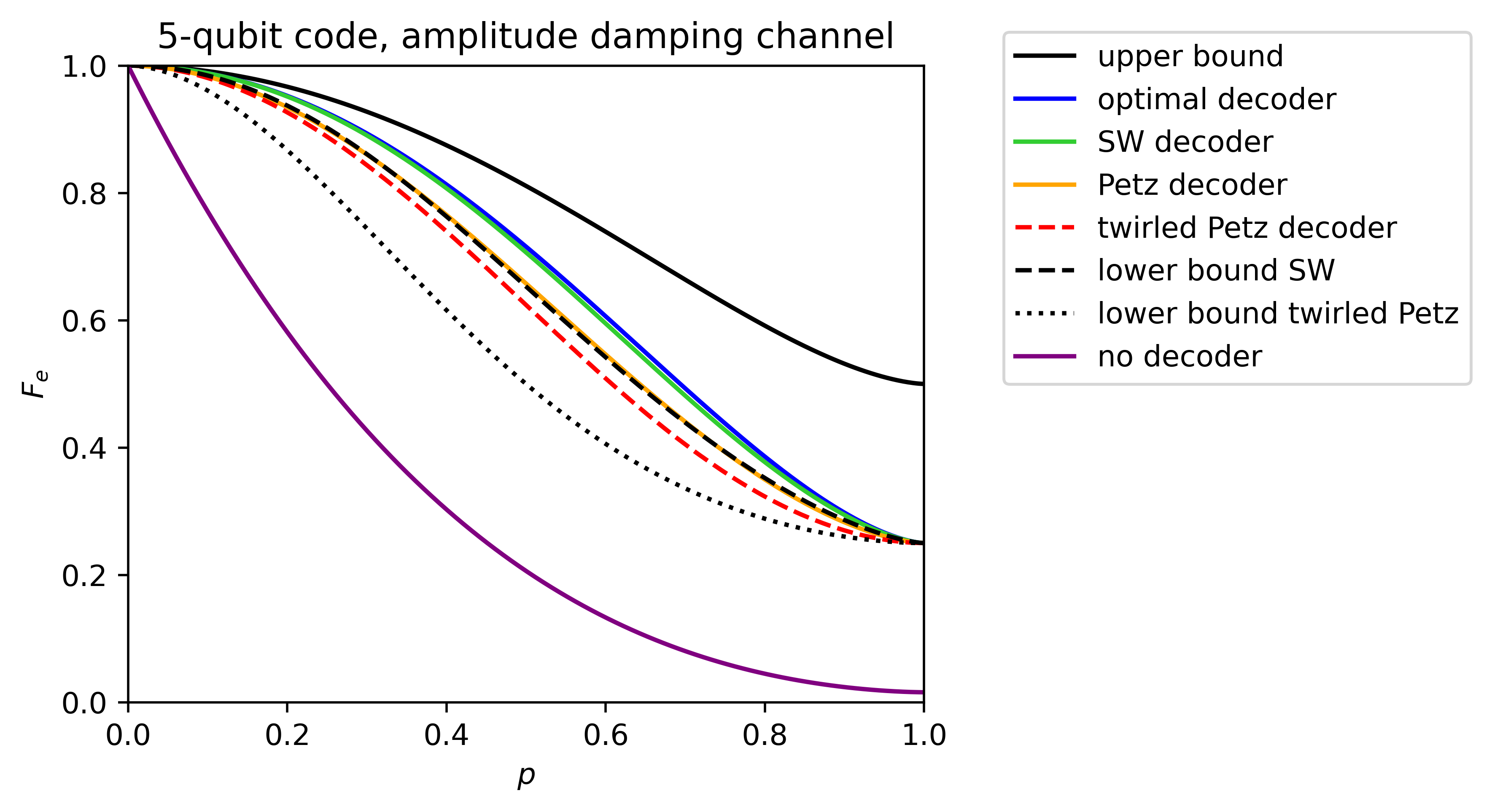}
\caption{Performance of SW, Petz, and twirled Petz decoder as measured by the entanglement fidelity $F_e$.}
\label{fig:numerics}
\end{figure}

The results presented in Figure~\ref{fig:numerics} imply the following qualitative statements. 
\begin{itemize}
\item \emph{3-qubit bit-flip code, bit-flip channel.} 
The SW decoder performs strictly better than the Petz decoder for any $p\in (0,1/2)\cup (1/2,1)$; for $p\in \{0,1/2,1\}$, their performances coincide. 
The Petz decoder and the twirled Petz decoder have the same entanglement fidelity. 
In comparison, the lower bound SW is slightly smaller if $p\in (0,1/2)\cup (1/2,1)$. 

\item \emph{LNCY 4-qubit code, amplitude damping channel.} 
The SW decoder performs strictly better than the Petz decoder for any $p\in (0,1)$; for $p\in \{0,1\}$, their performances coincide. 
The entanglement fidelity of the twirled Petz decoder is strictly smaller than that of the Petz decoder for any $p\in (0,1)$. 
The lower bound SW is slightly smaller or greater than the entanglement fidelity of the Petz decoder depending on the value of $p\in [0,1]$. 

\item \emph{5-qubit code, amplitude damping channel.} 
The SW decoder performs strictly better than the Petz decoder for any $p\in (0,1)$; for $p\in \{0,1\}$, their performances coincide. 
The entanglement fidelity of the twirled Petz decoder is strictly smaller than that of the Petz decoder for any $p\in (0,1)$. 
The lower bound SW is slightly smaller or greater than the entanglement fidelity of the Petz decoder depending on the value of $p\in [0,1]$. 
\end{itemize}
These conclusions allow us to answer the questions raised above as follows.

\begin{description}
    \item[Answer to Question~\ref{question1}] In all three settings, the SW decoder performs better than the Petz decoder, and in most cases it performs strictly better. 
    \item[Answer to Question~\ref{question2}] Depending on the setting and the parameter value $p\in [0,1]$, it is possible that 
$\widetilde{I}_{1/2}^{\uparrow\uparrow}(R:E)_{\mathcal{N}_{A\rightarrow E}^c(\proj{\rho}_{RA})}$ is greater or smaller than 
$I_{1/2}^{\uparrow\downarrow}(R:E)_{\mathcal{N}_{A\rightarrow E}^c(\proj{\rho}_{RA})}$. 
Therefore, these two quantities cannot be ordered by the same inequality for all possible $(\rho,\mathcal{N})$.
\end{description}

\section{Conclusion}\label{sec:discussion}
We defined two specific decoders for one-shot entanglement transmission, the Petz decoder and the twirled Petz decoder, and analyzed their performance as measured by the entanglement fidelity. 
The main results of this analysis were presented in Theorem~\ref{thm:petz}, accompanied by Corollary~\ref{cor:petz}. 
Theorem~\ref{thm:petz} asserts that the Petz decoder $\mathcal{D}^{\mathrm{P}}_{B\rightarrow A}$ for one-shot entanglement transmission of $\rho_A$ over a noisy channel $\mathcal{N}_{A\rightarrow B}$ satisfies
\begin{align}\label{eq:conclusion-petz1}
\log F_e(\rho_A,\mathcal{D}^{\mathrm{P}}_{B\rightarrow A}\circ\mathcal{N}_{A\rightarrow B})
=\widetilde{I}_2^\uparrow(\sigma_{RB}\| \sigma_R^{-1}),
\end{align}
where $\sigma_{RB}\coloneqq\mathcal{N}_{A\rightarrow B}(\proj{\rho}_{RA})$ and $\ket{\rho}_{RA}$ is a purification of $\rho_A$. 
\eqref{eq:conclusion-petz1} is a closed-form expression for the entanglement fidelity of the Petz decoder in terms of $\sigma_{RB}$. 
\eqref{eq:conclusion-petz1} can be expressed in an alternative way by duality, as shown in Corollary~\ref{cor:petz}: 
The entanglement fidelity of the Petz decoder is determined by the singly minimized Petz R\'enyi mutual information of order $1/2$ associated with a complementary channel $\mathcal{N}^c_{A\rightarrow E}$ as
\begin{align}\label{eq:conclusion-petz2}
\log F_e(\rho_A,\mathcal{D}^{\mathrm{P}}_{B\rightarrow A}\circ\mathcal{N}_{A\rightarrow B}) 
= -I_{1/2}^{\uparrow\downarrow}(R:E)_{\mathcal{N}^c_{A\rightarrow E}(\proj{\rho}_{RA})}.
\end{align}
Furthermore, Corollary~\ref{cor:petz} asserts that the twirled Petz decoder cannot outperform the Petz decoder. 
Thus, one-shot entanglement transmission is a task where the simpler Petz map suffices for recovery and even performs better. 
For other settings where the twirled Petz map is unnecessary and the Petz map is sufficient for recovery, see~\cite{alhambra2017dynamical,alhambra2018work,swingle2019recovery,cotler2019entanglement,chen2020entanglement}.

Our results on the performance of the Petz decoder can be compared with similar results on the performance of the SW decoder (Theorem~\ref{thm:sw}, Corollary~\ref{cor:sw}). 
For the SW decoder, it is not known whether the entanglement fidelity can be written as a closed-form expression of $\sigma_{RB}$. 
However, at least a lower bound on the performance can be proved that is a closed-form expression of $\sigma_{RB}$: According to Theorem~\ref{thm:sw}, we have
\begin{align}
\log F_e(\rho_A,\mathcal{D}^{\mathrm{SW}}_{B\rightarrow A}\circ\mathcal{N}_{A\rightarrow B})
\geq I_2^\downarrow(\sigma_{RB}\| \sigma_R^{-1}),
\end{align}
which is structurally similar to~\eqref{eq:conclusion-petz1}. 
This lower bound can be expressed in an alternative way by duality, as shown in Corollary~\ref{cor:sw}: 
The entanglement fidelity of the SW decoder is lower-bounded by the non-minimized sandwiched R\'enyi mutual information of order $1/2$ associated with a complementary channel $\mathcal{N}^c_{A\rightarrow E}$ as 
\begin{align}\label{eq:conclusion-sw2}
\log F_e(\rho_A,\mathcal{D}^{\mathrm{SW}}_{B\rightarrow A}\circ\mathcal{N}_{A\rightarrow B})
\geq -\widetilde{I}_{1/2}^{\uparrow\uparrow}(R:E)_{\mathcal{N}^c_{A\rightarrow E}(\proj{\rho}_{RA})},
\end{align}
which is structurally similar to~\eqref{eq:conclusion-petz2}. 

Our analytical results leave open the question of whether the SW decoder or the Petz decoder performs better. 
However, the numerical results in Section~\ref{sec:numerics} provide a partial answer: 
For all three settings examined there, the SW decoder outperformed the Petz decoder. 
At present, we are not aware of any example where the SW decoder performs strictly worse than the Petz decoder. 
Another question left open by our analytical results is how~\eqref{eq:conclusion-petz2} compares to the lower bound on the entanglement fidelity of the SW decoder in~\eqref{eq:conclusion-sw2}. 
Our numerical results show that this question does not have a universal answer: 
For some parameter ranges,~\eqref{eq:conclusion-petz2} was greater than the right-hand side of~\eqref{eq:conclusion-sw2}, while for others the opposite was true. 
Thus, although the SW decoder typically performs better than the Petz decoder, our analytical estimates of the performance of these decoders -- given by the right-hand sides of~\eqref{eq:conclusion-petz2} and~\eqref{eq:conclusion-sw2} -- appear to be comparably good.

The numerical results in Section~\ref{sec:numerics} also enable a comparison between the performance of the Petz decoder and that of an optimal decoder. 
The latter's performance can be formulated as a semidefinite program~\cite{fletcher2007optimum}, making its numerical computation relatively easy. 
In contrast, the Petz decoder has the advantage that its analytical form is known a priori, and its performance has a closed-form expression in terms of $\sigma_{RB}$, as shown in~\eqref{eq:conclusion-petz1}. 
These features are typically absent in an optimal decoder, but could be beneficial in certain applications.

Our result in~\eqref{eq:conclusion-petz2} provides an operational interpretation of the singly minimized Petz R\'enyi mutual information of order $1/2$ for states of the form $\mathcal{N}^c_{A\rightarrow E}(\proj{\rho}_{RA})$, i.e., states obtained from a bipartite pure state by acting on one of its parts with a CPTP map. 
For such states,~\eqref{eq:conclusion-petz2} shows that $I_{1/2}^{\uparrow\downarrow}(R:E)_{\mathcal{N}^c_{A\rightarrow E}(\proj{\rho}_{RA})}$ quantifies how well the initial state $\proj{\rho}_{RA}$ is recovered by the Petz decoder from a system $B$ that is the output of a channel $\mathcal{N}_{A\rightarrow B}$ complementary to $\mathcal{N}^c_{A\rightarrow E}$. 
Thus,~\eqref{eq:conclusion-petz2} reveals a connection between information recovery via the Petz map (\emph{left-hand side}) and a corresponding information measure (\emph{right-hand side}). 

Recently, the Petz map and the twirled Petz map have attracted interest in the research field of quantum theories of spacetime, as they provide a tool for entanglement wedge  reconstruction~\cite{vardhan2023petzrecoverysubsystemsconformal,bahiru2023explicit,cotler2019entanglement,chen2020entanglement}.
In this context, it has been conjectured that there exists a deeper connection between information recovery via the Petz map and the \emph{reflected entropy}~\cite{penington2020replicawormholesblackhole,akers2022page}. 
Our result in~\eqref{eq:conclusion-petz2} suggests that a modified version of this conjecture holds. 
As we will show in future work~\cite{burri2024minreflected}, the singly minimized Petz R\'enyi mutual information of order $1/2$ naturally serves as an upper bound for the min-reflected entropy. 
Consequently, by~\eqref{eq:conclusion-petz2}, it can be inferred that the min-reflected entropy provides an upper bound for the entanglement fidelity of the Petz decoder. 
This establishes a general relationship between information recovery via the Petz map and the \emph{min-reflected entropy}. 
It would be interesting to explore how tight this relationship is. 
More broadly, it remains an open question for future research whether there are even deeper connections between information recovery (via the Petz map) and the (min-)reflected entropy.

\begin{acknowledgments} 
I am grateful to Renato Renner for discussions and valuable comments on a draft of this work, 
and to Christophe Piveteau and Lukas Schmitt for discussions. 
This work was supported by the Swiss National Science Foundation via project No.\ 20QU-1\_225171
and the National Centre of Competence in Research SwissMAP, 
and the Quantum Center at ETH Zurich.
\end{acknowledgments}

\appendix 
\section{Construction of Schumacher-Westmoreland decoder}\label{app:sw-construction}
In this section, we summarize the construction of the SW decoder~\cite{schumacher2001approximate}. 
While our summary qualitatively follows the original description in~\cite{schumacher2001approximate}, it differs from the original presentation in a few places to make the construction more explicit and specific.
For instance, we have chosen to fix the dimensions of certain auxiliary systems for technical convenience, whereas the original work~\cite{schumacher2001approximate} leaves them unspecified. 
(Specifically, the dimension of $R$ will be the rank of $\rho_A$, and the dimension of $E$ will be $d_Ad_B$.)

Let $(\rho_A,\mathcal{N}_{A\rightarrow B})\in \mathcal{S}(A)\times \CPTP(A,B)$. 
Let $d$ be the rank of $\rho_A$, and let $R$ be a $d$-dimensional Hilbert space. 
Let $\ket{\rho}_{RA}\in RA$ be such that $\tr_R[\proj{\rho}_{RA}]=\rho_A$. 
Consider a Schmidt decomposition of $\ket{\rho}_{RA}$: 
Let $(\lambda_k)_{k\in [d]}\in [0,1]^{\times d}$ be a probability distribution, 
and let $\{\ket{r_k}_R\}_{k\in [d]},\{\ket{a_k}_A\}_{k\in [d_A]}$ be orthonormal bases for $R,A$ such that
\begin{align}\label{eq:rho_ra-schmidt}
\ket{\rho}_{RA}&=\sum_{k\in [d]} \sqrt{\lambda_k}\ket{r_k}_R\otimes\ket{a_k}_{A}.
\end{align}

Let $B'$ be isomorphic to $B$, let $E\coloneqq A B'$, and let $V\in \mathcal{L}(A,BE)$ be an isometry such that 
$\mathcal{N}_{A\rightarrow B}(X_A)=\tr_E[VX_AV^\dagger]$ for all $X_A\in \mathcal{L}(A)$. 
Let $\ket{\sigma}_{RBE}\coloneqq V\ket{\rho}_{RA}$. 
Consider a spectral decomposition of $\sigma_E$: 
Let $(\mu_l)_{l\in [d_E]}$ be a probability distribution and let $\{\ket{e_l}_{E}\}_{l\in [d_E]}$ be an orthonormal basis for $E$ such that
\begin{align}\label{eq:sigma_e-spectral}
\sigma_{E}&=\sum_{l\in [d_E]}\mu_l \proj{e_l}_{E}.
\end{align}

Let $A'$ be isomorphic to $A$, let $E'\coloneqq A'B$ (which is isomorphic to $E$), and let us define the following objects.
\begin{align}
\hat{\sigma}_{RE}
&\coloneqq \sigma_R\otimes\sigma_{E}
\label{eq:def-sigma-hat}
\\
\ket{\Omega}_{RER'E'}
&\coloneqq\sum_{k\in [d]}\sum_{l\in [d_E]} \ket{r_k}_R\otimes\ket{e_l}_{E}\otimes \ket{r_k}_{R'}\otimes\ket{e_l}_{E'}
\\
\ket{\Psi^{\sigma}}_{RER'E'}
&\coloneqq\sigma_{RE}^{\frac{1}{2}}\ket{\Omega}_{RER'E'}
\label{eq:def-phi-rho}
\\
M_{R'E'}
&\coloneqq \sum_{k,j\in [d]}\sum_{m,n\in [d_E]}
\bra{r_k}_R\otimes \bra{e_m}_{E}\hat{\sigma}_{RE}^{\frac{1}{2}}\sigma_{RE}^{\frac{1}{2}}\ket{r_j}_R\otimes \ket{e_n}_{E}\, 
\ketbra{r_j}{r_k}_{R'}\otimes \ketbra{e_n}{e_m}_{E'}
\label{eq:def-M-svd}
\\
\ket{\phi_{k,l}}_{R'E'}&\coloneqq \ket{r_k}_{R'}\otimes\ket{e_l}_{E'} 
\quad \forall k\in [d], l\in [d_E]
\label{eq:def-phi-kl}
\end{align}

Consider a singular value decomposition of $M_{R'E'}$: Let $\tilde{U}_{R'E'},\hat{U}_{R'E'}\in \mathcal{L}(R'E')$ be unitary and such that $M_{R'E'}=\hat{U}_{R'E'}\Sigma_{R'E'} \tilde{U}_{R'E'}$ for a suitable positive semidefinite operator $\Sigma_{R'E'}\in \mathcal{L}(R'E')$ that is diagonal in the orthonormal basis $\{\ket{r_k}_{R'}\otimes\ket{e_l}_{E'}\}_{k\in [d], l\in [d_E]}$. 
Let $U_{R'E'}\coloneqq \tilde{U}^\dagger_{R'E'}\hat{U}^\dagger_{R'E'}$. 

Let $\ket{0}_{R'A'}\in R'A'$ be an arbitrary but fixed unit vector. Then,
\begin{align}
\tr_{BR'A'}[\proj{\sigma}_{RBE}\otimes \proj{0}_{R'A'}]=\sigma_{RE}=\tr_{R'E'}[\proj{\Psi^\sigma}_{RER'E'}],
\end{align}
where the last equality follows from~\eqref{eq:def-phi-rho}. 
Therefore, there exists a unitary $W_{R'E'}\in \mathcal{L}(R'E')$ such that
\begin{align}\label{eq:def-w}
\ket{\Psi^\sigma}_{RER'E'}=W_{R'E'}\ket{\sigma}_{RBE}\otimes \ket{0}_{R'A'}.
\end{align} 

Let us define the following objects.
\begin{align}
\mathcal{D}'_{B\rightarrow R'E'}(X_B)&\coloneqq X_B \otimes \proj{0}_{R'A'}
\qquad\forall X_{B}\in \mathcal{L}(B)
\label{eq:def-d1}\\
K_l&\coloneqq \sum_{k\in [d]} \ket{a_k}_{A}\bra{\phi_{k,l}}_{R'E'}U_{R'E'}W_{R'E'}
\qquad \forall l\in [d_E]
\label{eq:def-kraus}\\
\mathcal{D}''_{R'E'\rightarrow A}(X_{R'E'})&\coloneqq \sum_{l\in [d_E]} K_l X_{R'E'} K_l^\dagger
\qquad\forall X_{R'E'}\in \mathcal{L}(R'E')
\label{eq:def-d2}
\end{align}
The SW decoder for $(\rho_A,\mathcal{N}_{A\rightarrow B})$ is then defined as 
\begin{align}\label{eq:d-sw}
\mathcal{D}_{B\rightarrow A}^{\mathrm{SW}}
&\coloneqq\mathcal{D}''_{R'E'\rightarrow A}\circ \mathcal{D}'_{B\rightarrow R'E'}.
\end{align}

\section{Proofs and remarks}
\subsection{Proof of Theorem~\ref{thm:sw}} \label{app:sw-proof}
\begin{proof}
\emph{Case 1: $d_R=\rank (\rho_A)$.} 
Then, $\ket{\rho}_{RA}$ has the same form as in the construction of the SW decoder. 
Consequently, all objects can be defined as in the construction of the SW decoder, see Section~\ref{app:sw-construction}. 
In addition, we define the following unit vectors.
\begin{align}
\ket{\Psi^{\hat{\sigma}}}_{RER'E'}
&\coloneqq\hat{\sigma}_{RE}^{\frac{1}{2}}\ket{\Omega}_{RER'E'}
=\sum_{k\in [d]}\sum_{l\in [d_E]}\sqrt{\lambda_k\mu_l}\ket{r_k}_R\otimes \ket{e_l}_E\otimes \ket{\phi_{k,l}}_{R'E'}
\label{eq:phi_hat_sigma}\\
\ket{\psi^{\hat{\sigma}}}_{RER'E'}
&\coloneqq W_{R'E'}^\dagger U_{R'E'}^\dagger \ket{\Psi^{\hat{\sigma}}}_{RER'E'}
\label{eq:psi_hat_sigma}
\end{align}
The second equality in~\eqref{eq:phi_hat_sigma} follows from~\eqref{eq:rho_ra-schmidt}, \eqref{eq:sigma_e-spectral}, and~\eqref{eq:def-sigma-hat} because $\sigma_R=\rho_R$. 
We have
\begin{align}
&\mathcal{D}_{R'E'\rightarrow A}''(\proj{\psi^{\hat{\sigma}}}_{RER'E'})
\\
&=\sum_{l\in [d_E]}\sum_{k\in [d]}\ket{a_k}_{A}\bra{\phi_{k,l}}_{R'E'}
\proj{\Psi^{\hat{\sigma}}}_{RER'E'}
\sum_{j\in [d]} \ket{\phi_{j,l}}_{R'E'} \bra{a_j}_{A}
\label{eq:d-1}\\
&=\sum_{l\in [d_E]} \sum_{k\in [d]}
\sqrt{\lambda_k\mu_l}
\ket{r_k}_R\otimes \ket{e_l}_E \otimes \ket{a_k}_A
\sum_{j\in [d]}
\sqrt{\lambda_j\mu_l}
\bra{r_j}_R\otimes \bra{e_l}_E \otimes \bra{a_j}_A
\label{eq:d-2}\\
&=\proj{\rho}_{RA} \otimes \sigma_E .
\label{eq:d-3}
\end{align}
\eqref{eq:d-1} follows from~\eqref{eq:def-kraus},~\eqref{eq:def-d2}, and~\eqref{eq:psi_hat_sigma}.
\eqref{eq:d-2} follows from~\eqref{eq:def-phi-kl} and~\eqref{eq:phi_hat_sigma}.
\eqref{eq:d-3} follows from~\eqref{eq:rho_ra-schmidt} and~\eqref{eq:sigma_e-spectral}. 

We are now ready to analyze the quantity of interest.
\begin{align}
F_e(\rho_A,\mathcal{D}_{B\rightarrow A}^{\mathrm{SW}}\circ\mathcal{N}_{A\rightarrow B})
&=F^2(\proj{\rho}_{RA},\mathcal{D}_{B\rightarrow A}^{\mathrm{SW}}\circ\mathcal{N}_{A\rightarrow B}(\proj{\rho}_{RA}))
\label{eq:proof-prop000}\\
&=F^2(\proj{\rho}_{RA},\mathcal{D}_{B\rightarrow A}^{\mathrm{SW}}(\sigma_{RB}))
\label{eq:proof-prop00}\\
&\geq F^2(\proj{\rho}_{RA} \otimes \sigma_E,\mathcal{D}_{B \rightarrow A}^{\mathrm{SW}} (\proj{\sigma}_{RBE}) )
\label{eq:proof-prop0}\\
&=F^2(\mathcal{D}''_{R'E'\rightarrow A}(\proj{\psi^{\hat{\sigma}}}_{RER'E'}) ,\mathcal{D}''_{R'E'\rightarrow A}\circ \mathcal{D}'_{B\rightarrow R'E'}(\proj{\sigma}_{RBE}) )
\label{eq:proof-prop1}\\
&\geq F^2(\proj{\psi^{\hat{\sigma}}}_{RER'E'},\mathcal{D}'_{B\rightarrow R'E'}(\proj{\sigma}_{RBE}) )
\label{eq:proof-prop2}\\
&=F^2(\proj{\psi^{\hat{\sigma}}}_{RER'E'} ,\proj{\sigma}_{RBE}\otimes\proj{0}_{R'A'} )
\label{eq:proof-prop3}\\
&=\lvert \bra{\psi^{\hat{\sigma}}}_{RER'E'}\ket{\sigma}_{RBE}\otimes\ket{0}_{R'A'} \rvert^2
\\
&=\lvert \bra{\Psi^{\hat{\sigma}}}_{RER'E'}1_{RE}\otimes U_{R'E'}\ket{\Psi^{\sigma}}_{RER'E'}\rvert^2
\label{eq:proof-prop4}\\
&=\lvert\bra{\Omega}_{RER'E'} \hat{\sigma}_{RE}^{1/2}\sigma_{RE}^{1/2} \otimes U_{R'E'} \ket{\Omega}_{RER'E'}\rvert^2
\label{eq:proof-prop5}\\
&=\lvert\tr[M_{R'E'}U_{R'E'}]\rvert^2
=\tr[\Sigma_{R'E'}]^2
=\rVert M_{R'E'}\rVert_1^2
=\lVert \hat{\sigma}_{RE}^{\frac{1}{2}}\sigma_{RE}^{\frac{1}{2}}\rVert_1^2
\label{eq:proof-prop6}\\
&=F^2(\hat{\sigma}_{RE},\sigma_{RE})
=F^2(\sigma_R\otimes \sigma_{E},\sigma_{RE})
\label{eq:proof-prop7}
\end{align}
\eqref{eq:proof-prop00} holds because $\sigma_{RB}=\tr_E[\proj{\sigma}_{RBE}]=\mathcal{N}_{A\rightarrow B}(\proj{\rho}_{RA})$.
\eqref{eq:proof-prop0} follows from the monotonicity of the fidelity under CPTP maps (and thus, under the partial trace over $E$). 
\eqref{eq:proof-prop1} follows from~\eqref{eq:d-sw} and~\eqref{eq:d-3}. 
\eqref{eq:proof-prop2} follows from the monotonicity of the fidelity under CPTP maps (and thus, under $\mathcal{D}''_{R'E'\rightarrow A}$). 
\eqref{eq:proof-prop3} follows from~\eqref{eq:def-d1}. 
\eqref{eq:proof-prop4} follows from~\eqref{eq:def-w} and~\eqref{eq:psi_hat_sigma}. 
\eqref{eq:proof-prop5} follows from~\eqref{eq:def-phi-rho} and~\eqref{eq:phi_hat_sigma}. 
\eqref{eq:proof-prop6} follows from~\eqref{eq:def-M-svd}. 
\eqref{eq:proof-prop7} follows from~\eqref{eq:def-sigma-hat}. 

By taking the logarithm, we can conclude that 
\begin{align}
\log F_e(\rho_A,\mathcal{D}_{B\rightarrow A}^{\mathrm{SW}}\circ\mathcal{N}_{A\rightarrow B})
&\geq \log F^2(\sigma_R\otimes \sigma_{E},\sigma_{RE})
\label{eq:proof-prop71}
\\
&=-\widetilde{D}_{1/2}(\sigma_{RE}\|  \sigma_R\otimes \sigma_{E})
=-\widetilde{I}_{1/2}^{\uparrow\uparrow}(R:E)_\sigma
\label{eq:proof-prop8}\\
&=I_2^\downarrow(\sigma_{RB}\| \sigma_R^{-1}).
\label{eq:proof-prop9}
\end{align}
\eqref{eq:proof-prop9} holds by a duality relation for the non-minimized sandwiched R\'enyi mutual information (or, equivalently, for the minimized generalized Petz R\'enyi mutual information)~\cite{hayashi2016correlation,burri2024doublyminimizedpetzrenyi,burri2024doublyminimizedsandwichedrenyi}. 

\emph{Case 2: $d_R>\rank (\rho_A)$.} 
Let $\tilde{R}$ be a Hilbert space whose dimension is $\rank(\rho_A)$. 
Let $\ket{\tilde{\rho}}_{\tilde{R}A}\in \tilde{R}A$ be such that $\tr_{\tilde{R}}[\proj{\tilde{\rho}}_{\tilde{R}A}]=\rho_A$. 
Since both $\ket{\rho}_{RA}$ and $\ket{\tilde{\rho}}_{\tilde{R}A}$ are purifications of $\rho_A$,  there exists an isometry $V\in \mathcal{L}(\tilde{R},R)$ such that 
$\ket{\rho}_{RA}=V\ket{\tilde{\rho}}_{\tilde{R}A}$. 
Let $\tilde{\sigma}_{\tilde{R}B}\coloneqq \mathcal{N}_{A\rightarrow B}(\proj{\tilde{\rho}}_{\tilde{R}A})$. 
Then,
\begin{align}
\sigma_{RB}
=\mathcal{N}_{A\rightarrow B}(\proj{\rho}_{RA})
=\mathcal{N}_{A\rightarrow B}(V\proj{\tilde{\rho}}_{\tilde{R}A}V^\dagger)
=V \tilde{\sigma}_{\tilde{R}B} V^\dagger .
\label{eq:proof-sigma12}
\end{align}
We can conclude that
\begin{align}\label{eq:proof-case2}
\log F_e(\rho_A,\mathcal{D}^{\mathrm{SW}}_{B\rightarrow A}\circ\mathcal{N}_{A\rightarrow B})
\geq I_2^\downarrow(\tilde{\sigma}_{\tilde{R}B}\| \tilde{\sigma}_{\tilde{R}}^{-1})
=I_2^\downarrow(V\tilde{\sigma}_{\tilde{R}B}V^\dagger \| V\tilde{\sigma}_{\tilde{R}}^{-1}V^\dagger)
= I_2^\downarrow(\sigma_{RB}\| \sigma_R^{-1}).
\end{align}
The inequality in~\eqref{eq:proof-case2} follows from case 1. 
The first equality in~\eqref{eq:proof-case2} follows from the invariance under local isometries of the minimized generalized Petz R\'enyi mutual information~\cite{burri2024doublyminimizedpetzrenyi}, 
and the second equality follows from~\eqref{eq:proof-sigma12}.
\end{proof}

\subsection{Proof of Corollary~\ref{cor:sw}} \label{app:sw-proof-cor}
\begin{proof}[Proof of~\eqref{eq:duality-sw}]
\eqref{eq:duality-sw} follows from Theorem~\ref{thm:sw} and the duality of the non-minimized sandwiched R\'enyi mutual information~\cite{hayashi2016correlation,burri2024doublyminimizedpetzrenyi,burri2024doublyminimizedsandwichedrenyi}.
\end{proof}
\begin{proof}[Proof of~\eqref{eq:duality-sw2}]
Since $\mathcal{N}^c_{A\rightarrow E}$ is a complementary channel to $\mathcal{N}_{A\rightarrow B}$, there exists an isometry $V\in \mathcal{L}(A,BE)$ such that 
$\mathcal{N}_{A\rightarrow B}(X_A)= \tr_E[VX_AV^\dagger] $ and 
$\mathcal{N}_{A\rightarrow E}^c(X_A)= \tr_B[VX_AV^\dagger] $ for all $X_A\in \mathcal{L}(A)$. 
Let $\ket{\tilde{\sigma}}_{RBE}\coloneqq V\ket{\rho}_{RA}$. 
Then
\begin{align}
\tilde{\sigma}_{RB}
&=\tr_E[\proj{\tilde{\sigma}}_{RBE}]
=\tr_E[V\proj{\rho}_{RA}V^\dagger]
=\mathcal{N}_{A\rightarrow B}(\proj{\rho}_{RA})
=\sigma_{RB},
\label{eq:proof-duality-sw2}\\
\tilde{\sigma}_{RE}
&=\tr_B[\proj{\tilde{\sigma}}_{RBE}]
=\tr_B[V\proj{\rho}_{RA}V^\dagger]
=\mathcal{N}_{A\rightarrow E}^c(\proj{\rho}_{RA}).
\label{eq:proof-duality-sw3}
\end{align} 
We can conclude that
\begin{align}
\log F_e(\rho_A,\mathcal{D}^{\mathrm{SW}}_{B\rightarrow A}\circ\mathcal{N}_{A\rightarrow B})
&\geq I_2^\downarrow(\sigma_{RB}\| \sigma_R^{-1})
\label{eq:proof-cor-sw1}
\\
&=I_2^\downarrow(\tilde{\sigma}_{RB}\| \tilde{\sigma}_R^{-1})
=-\widetilde{I}_{1/2}^{\uparrow\uparrow}(R:E)_{\tilde{\sigma}}
\label{eq:proof-cor-sw2}\\
&=-\widetilde{I}_{1/2}^{\uparrow\uparrow}(R:E)_{\mathcal{N}_{A\rightarrow E}^c(\proj{\rho}_{RA})} .
\label{eq:proof-cor-sw3}
\end{align}
\eqref{eq:proof-cor-sw1} follows from~\eqref{eq:duality-sw}. 
\eqref{eq:proof-cor-sw2} follows from~\eqref{eq:proof-duality-sw2} and the duality of the non-minimized sandwiched R\'enyi mutual information~\cite{hayashi2016correlation,burri2024doublyminimizedpetzrenyi,burri2024doublyminimizedsandwichedrenyi}. 
\eqref{eq:proof-cor-sw3} follows from~\eqref{eq:proof-duality-sw3}.
\end{proof}
\begin{proof}[Proof of~\eqref{eq:sw-original}]
Let $\ket{\sigma}_{RBE}\in RBE$ be such that $\tr_E[\proj{\sigma}_{RBE}]=\sigma_{RB}$. 
\begin{align}
\log
F_e(\rho_A,\mathcal{D}_{B\rightarrow A}^{\mathrm{SW}}\circ\mathcal{N}_{A\rightarrow B})
&\geq I_2^\downarrow(\sigma_{RB}\| \sigma_R^{-1})
=-\widetilde{I}_{1/2}^{\uparrow\uparrow}(R:E)_\sigma
=-\widetilde{D}_{1/2}(\sigma_{RE}\|  \sigma_R\otimes \sigma_{E})
\label{eq:proof-prop-1}\\
&\geq -\widetilde{D}_{1}(\sigma_{RE}\|  \sigma_R\otimes \sigma_{E})
\label{eq:proof-prop-2}\\
&=-D(\sigma_{RE}\|  \sigma_R\otimes \sigma_{E})
=-I(R:E)_\sigma=-H(R)_\sigma +H(R|E)_\sigma
\label{eq:proof-prop-20}\\
&=-H(R)_\sigma -H(R|B)_\sigma
=-H(R)_\sigma +I(R\,\rangle B)_\sigma
\label{eq:proof-prop-21}\\
&=D(\sigma_{RB}\| \sigma_R^{-1}\otimes \sigma_B)=-\varepsilon^\mathrm{SW}
\label{eq:proof-prop-22}\\
&=H(R|A)_\rho -H(R|B)_\sigma.
\label{eq:proof-prop-3}
\end{align}
\eqref{eq:proof-prop-1} follows from~\eqref{eq:duality-sw}. 
\eqref{eq:proof-prop-2} follows from the monotonicity of the sandwiched divergence in the R\'enyi order~\cite{mueller2013quantum}. 
\eqref{eq:proof-prop-21} follows from the duality of the conditional entropy~\cite{tomamichel2014relating}.
\eqref{eq:proof-prop-22} follows from~\eqref{eq:proof-prop-21} because $\log (\sigma_R^{-1})=-\log \sigma_R$. 
\eqref{eq:proof-prop-3} follows from~\eqref{eq:proof-prop-21} because $\proj{\rho}_{RA}$ is a pure state and $\rho_R=\sigma_R$.
\end{proof}

\subsection{More details on Remark~\ref{rem:original}}\label{app:original}
Based on the construction of the SW decoder (see Appendix~\ref{app:sw-construction}), 
the original work on the SW decoder~\cite{schumacher2001approximate} contains a proof of the following inequalities.
\begin{align}
F_e^{1/2}(\rho_A,\mathcal{D}_{B\rightarrow A}^{\mathrm{SW}}\circ\mathcal{N}_{A\rightarrow B})
&\geq F(\sigma_R\otimes \sigma_{E},\sigma_{RE})
\label{eq:original0}\\
&\geq 1-\frac{1}{2} \lVert \sigma_R\otimes \sigma_{E}-\sigma_{RE} \rVert_1
\label{eq:fuchs}\\
&\geq 1-\sqrt{\frac{\ln(2)}{2}D(\sigma_{RE}\| \sigma_R\otimes \sigma_E)} 
\label{eq:pinsker}\\
&=1-\sqrt{\frac{\ln(2)}{2}\varepsilon^{\mathrm{SW}}}
\label{eq:original-dual}\\
&\geq 1-\sqrt{\varepsilon^{\mathrm{SW}}}
\label{eq:original-duality}
\end{align}
\eqref{eq:original0} follows from similar arguments as in~\eqref{eq:proof-prop000}--\eqref{eq:proof-prop7}. 
\eqref{eq:fuchs} follows from the Fuchs-van de Graaf inequality~\cite{fuchs2004cryptographic}. 
\eqref{eq:pinsker} follows from the quantum Pinsker inequality~\cite{hiai1981sufficiency}. 
\eqref{eq:original-dual} follows from~\eqref{eq:proof-prop-21}, \eqref{eq:proof-prop-22}, and the definition of $\varepsilon^{\mathrm{SW}}$ in~\eqref{eq:def-epsilon}. 
The original proof therefore relies on a duality relation in~\eqref{eq:original-dual}, namely, the duality of the conditional entropy, see~\eqref{eq:proof-prop-21}. 
In contrast, our proof employs a different duality relation at an earlier stage of the proof, namely, the duality of the non-minimized sandwiched R\'enyi mutual information, see~\eqref{eq:proof-prop9}.

\subsection{Proof of Theorem~\ref{thm:petz}}\label{app:proof-thm-petz}

\begin{proof}[Proof of~\eqref{eq:thm-t}]
\begin{align}
&F_e(\rho_A,\mathcal{D}^{\mathrm{P},t}_{B\rightarrow A}\circ\mathcal{N}_{A\rightarrow B})
\label{eq:proof_pt00}\\
&=F^2(\proj{\rho}_{RA},\mathcal{D}^{\mathrm{P},t}_{B\rightarrow A}\circ\mathcal{N}_{A\rightarrow B}(\proj{\rho}_{RA}))
=\bra{\rho}_{RA}\mathcal{D}^{\mathrm{P},t}_{B\rightarrow A}\circ\mathcal{N}_{A\rightarrow B}(\proj{\rho}_{RA}) \ket{\rho}_{RA}
\label{eq:proof_pt0}\\
&=\bra{\rho}_{RA}\rho_A^{\frac{1}{2}(1-it)}\mathcal{N}^\dagger 
(\mathcal{N}(\rho_A)^{-\frac{1}{2}(1-it)} \mathcal{N}(\proj{\rho}_{RA}) \mathcal{N}(\rho_A)^{-\frac{1}{2}(1+it)} )
\rho_A^{\frac{1}{2}(1+it)}\ket{\rho}_{RA}
\label{eq:proof_pt1}\\
&=\bra{\rho}_{RA}\rho_R^{\frac{1}{2}(1-it)}\mathcal{N}^\dagger 
(\sigma_B^{-\frac{1}{2}(1-it)} \sigma_{RB} \sigma_B^{-\frac{1}{2}(1+it)} )
\rho_R^{\frac{1}{2}(1+it)}\ket{\rho}_{RA}
\\
&=\tr[\mathcal{N}^\dagger 
(\sigma_B^{-\frac{1}{2}(1-it)} \sigma_{RB} \sigma_B^{-\frac{1}{2}(1+it)} )
\rho_R^{\frac{1}{2}(1+it)}\proj{\rho}_{RA}\rho_R^{\frac{1}{2}(1-it)}]
\\
&=\tr[\sigma_B^{-\frac{1}{2}(1-it)} \sigma_{RB} \sigma_B^{-\frac{1}{2}(1+it)} \rho_R^{\frac{1}{2}(1+it)}
\mathcal{N}(\proj{\rho}_{RA})\rho_R^{\frac{1}{2}(1-it)}]
\\
&=\tr[\sigma_{RB} (\rho_R^{\frac{1}{2}(1+it)}\otimes \sigma_B^{-\frac{1}{2}(1+it)} )
\sigma_{RB}(\rho_R^{\frac{1}{2}(1-it)}\otimes \sigma_B^{-\frac{1}{2}(1-it)})]
\\
&=\lVert \sigma_{RB}^{\frac{1}{2}} (\rho_{R}^{\frac{1}{2}(1+it)}\otimes \sigma_{B}^{-\frac{1}{2}(1+it)}) \sigma_{RB}^{\frac{1}{2}} \rVert_2^2
=\lVert \sigma_{RB}^{\frac{1}{2}} (\sigma_{R}^{\frac{1}{2}(1+it)}\otimes \sigma_{B}^{-\frac{1}{2}(1+it)}) \sigma_{RB}^{\frac{1}{2}} \rVert_2^2
\label{eq:proof-thm-t}
\end{align}
\eqref{eq:proof_pt1} follows from~\eqref{eq:def_Rt}. 
\eqref{eq:proof-thm-t} holds because $\sigma_R=\rho_R$.
\end{proof}
\begin{proof}[Proof of~\eqref{eq:thm-petz}] 
$\mathcal{D}^{\mathrm{P}}_{B\rightarrow A}=\mathcal{P}_{\rho_A,\mathcal{N}_{A\rightarrow B}}
=\mathcal{R}_{\rho_A,\mathcal{N}_{A\rightarrow B}}^{0}=\mathcal{D}^{\mathrm{P},0}_{B\rightarrow A}$. 
Hence, the evaluation of~\eqref{eq:thm-t} for $t=0$ implies~\eqref{eq:thm-petz}.
\end{proof}
\begin{proof}[Proof of~\eqref{eq:thm-twirled}]
\begin{align}
F_e(\rho_A,\mathcal{D}^{\mathrm{P,twirled}}_{B\rightarrow A}\circ\mathcal{N}_{A\rightarrow B})
&=F^2(\proj{\rho}_{RA},\mathcal{D}^{\mathrm{P,twirled}}_{B\rightarrow A}\circ\mathcal{N}_{A\rightarrow B}(\proj{\rho}_{RA}))
\\
&=\bra{\rho}_{RA} \mathcal{D}^{\mathrm{P,twirled}}_{B\rightarrow A}\circ\mathcal{N}_{A\rightarrow B}(\proj{\rho}_{RA}) \ket{\rho}_{RA} 
\\
&=\int_{-\infty}^{\infty}\mathrm{d}t\, \beta_0(t) 
\bra{\rho}_{RA} \mathcal{D}^{\mathrm{P},t}_{B\rightarrow A}\circ\mathcal{N}_{A\rightarrow B}(\proj{\rho}_{RA}) \ket{\rho}_{RA} 
\\
&=\int_{-\infty}^{\infty}\mathrm{d}t\, \beta_0(t) F_e(\rho_A,\mathcal{D}^{\mathrm{P},t}_{B\rightarrow A}\circ\mathcal{N}_{A\rightarrow B}).
\label{eq:proof-twirled-t}
\end{align}
\eqref{eq:proof-twirled-t} follows from~\eqref{eq:proof_pt00} and~\eqref{eq:proof_pt0}. 
The equality in~\eqref{eq:thm-twirled} then follows from~\eqref{eq:thm-t}.
\end{proof}

\subsection{Proof of Corollary~\ref{cor:petz}}\label{app:proof-cor-petz}
In order to prove Corollary~\ref{cor:petz}, we will use the following lemma.
\begin{lem}\label{lem:trace}
Let $X,Y\in \mathcal{L}(A)$ be positive semidefinite. Then, for all $s,t\in \mathbb{R}$
\begin{align}\label{eq:lem}
0\leq \tr[XY^{s+it}XY^{s-it}]
\leq \tr[XY^sXY^s].
\end{align}
\end{lem}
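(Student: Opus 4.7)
The plan is to reduce both expressions in~\eqref{eq:lem} to explicit sums by diagonalizing $Y$. Write a spectral decomposition $Y=\sum_k y_k\proj{k}$, where the sum runs over an orthonormal eigenbasis of $Y$ restricted to its support (so every $y_k>0$), and set $X_{jk}\coloneqq \bra{j}X\ket{k}$. Because $X$ is Hermitian, $X_{kj}=\overline{X_{jk}}$, and a direct computation in this basis gives the key identity
\[
\tr[X Y^{\alpha} X Y^{\beta}] = \sum_{j,k} \lvert X_{jk}\rvert^2 \, y_k^{\alpha}\, y_j^{\beta}
\]
for arbitrary complex $\alpha,\beta$, using the convention that $Y^p$ is defined on the support of $Y$ as specified in Section~\ref{ssec:notation}.

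For the left inequality I would introduce $M\coloneqq X^{1/2} Y^{s+it} X^{1/2}$. Since $X^{1/2}$ is Hermitian and $(Y^{s+it})^\dagger = Y^{s-it}$, one has $M^\dagger = X^{1/2} Y^{s-it} X^{1/2}$. Cyclicity of the trace then yields
\[
\tr[XY^{s+it} X Y^{s-it}] = \tr[M M^\dagger] = \lVert M\rVert_2^2 \geq 0,
\]
which establishes the lower bound in~\eqref{eq:lem}.

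For the right inequality I would specialize the key identity to $\alpha=s+it$ and $\beta=s-it$, giving
\[
\tr[XY^{s+it} XY^{s-it}] = \sum_{j,k} \lvert X_{jk}\rvert^2 \, y_j^s y_k^s \, (y_k/y_j)^{it}.
\]
Swapping the labels $(j,k)\leftrightarrow (k,j)$ in this sum produces its complex conjugate, so the sum is real and equal to its own symmetrization,
\[
\tr[XY^{s+it} XY^{s-it}] = \sum_{j,k} \lvert X_{jk}\rvert^2 \, y_j^s y_k^s \, \cos\!\left(t \log(y_k/y_j)\right).
\]
Every coefficient $\lvert X_{jk}\rvert^2 y_j^s y_k^s$ is non-negative and $\cos\leq 1$, so the right-hand side is bounded above by $\sum_{j,k}\lvert X_{jk}\rvert^2 y_j^s y_k^s = \tr[XY^s X Y^s]$, which is the required upper bound.

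I do not foresee any serious obstacle; the only care needed is the spectral bookkeeping, namely restricting the sums to indices in the support of $Y$ so that $y_k^s$ is well defined for all real $s$ (in particular for $s\leq 0$, per the convention of Section~\ref{ssec:notation}).
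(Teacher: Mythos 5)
Your proposal is correct and follows essentially the same route as the paper: diagonalize $Y$, expand $\tr[XY^{s+it}XY^{s-it}]$ as $\sum_{j,k}\lvert X_{jk}\rvert^2 y_j^{s}y_k^{s}(y_k/y_j)^{it}$, and bound each term by its modulus (your $\cos\le 1$ step is the same as the paper's $\lvert y_j^{s+it}y_k^{s-it}\rvert = y_j^{s}y_k^{s}$). Your $\tr[MM^\dagger]=\lVert M\rVert_2^2\ge 0$ argument for the lower bound is just a more explicit version of the paper's one-line appeal to positive semidefiniteness, so there is nothing substantive to add.
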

\begin{proof}
The first inequality in~\eqref{eq:lem} follows from the assumption that $X$ and $Y$ are positive semidefinite. 
It remains to prove the second inequality in~\eqref{eq:lem}. 
Consider a spectral decomposition of $Y$: Let $\{\ket{e_j}\}_{j\in [d_A]}$ be an orthonormal basis for $A$ such that $Y=\sum_{j\in [d_A]}y_j\proj{e_j}_A$ where $y_j\in [0,\infty)$ are the eigenvalues of $Y$. 
Then
\begin{align}
\tr[XY^{s+it}XY^{s-it}]
&=\sum_{j,k\in [d_A]}y_j^{s+it}y_k^{s-it}\tr[X\proj{e_j} X\proj{e_k}]\\
&=\sum_{j,k\in [d_A]} y_j^{s+it}y_k^{s-it}\lvert\bra{e_j} X\ket{e_k}\rvert^2
\label{eq:proof-lemma}\\
&\leq \sum_{j,k\in [d_A]} \lvert y_j^{s+it}y_k^{s-it} \lvert \bra{e_j} X\ket{e_k}\rvert^2 \rvert\\
&= \sum_{j,k\in [d_A]}y_j^{s}y_k^{s}\lvert\bra{e_j} X\ket{e_k}\rvert^2
= \tr[XY^sXY^s].
\end{align}
\eqref{eq:proof-lemma} holds because $X$ is positive semidefinite, which entails that $X$ is self-adjoint ($X^\dagger=X$).
\end{proof}

We will now prove Corollary~\ref{cor:petz}.
\begin{proof}[Proof of (a)] 
\eqref{eq:thm-duality0} follows from~\eqref{eq:thm-petz} in Theorem~\ref{thm:petz} because 
$\widetilde{I}_2^\uparrow(\sigma_{RB}\| \sigma_R^{-1})=-I_{1/2}^{\uparrow\downarrow}(R:E)_\sigma$ by duality~\cite{hayashi2016correlation,burri2024doublyminimizedpetzrenyi,burri2024doublyminimizedsandwichedrenyi}. 
\eqref{eq:thm-duality} follows from~\eqref{eq:thm-duality0} by the same argument as in the proof of~\eqref{eq:duality-sw2}, see Appendix~\ref{app:sw-proof-cor}.
\end{proof}
\begin{proof}[Proof of (b)]
By Theorem~\ref{thm:petz}, the assertion is equivalent to the claim that for all $t\in\mathbb{R}$
\begin{align}
\tr[\sigma_{RB} (\sigma_R^{-1}\otimes \sigma_B )^{-\frac{1}{2}-i\frac{t}{2}}
\sigma_{RB}(\sigma_R^{-1}\otimes \sigma_B)^{-\frac{1}{2}+i\frac{t}{2}}]
&\leq \tr[\sigma_{RB} (\sigma_R^{-1}\otimes \sigma_B )^{-\frac{1}{2}}
\sigma_{RB}(\sigma_R^{-1}\otimes \sigma_B)^{-\frac{1}{2}}].
\end{align}
This inequality follows from Lemma~\ref{lem:trace}.
\end{proof}
\begin{proof}[Proof of (c)]
\eqref{eq:thm-bound1} follows from~\eqref{eq:proof-twirled-t} and~(b). 
It remains to prove~\eqref{eq:thm-bound3}. 
For any $\tau_R\in \mathcal{S}(R)$ such that $\rho_R\ll\tau_R$ and $\tau_R\ll \rho_R$
\begin{align}
&\log F_e(\rho_A,\mathcal{D}^\mathrm{P,twirled}_{B\rightarrow A}\circ\mathcal{N}_{A\rightarrow B})
\\
&=\log F^2(\proj{\rho}_{RA},(\mathcal{I}_{R\rightarrow R}\otimes \mathcal{R}_{\rho_A,\mathcal{N}_{A\rightarrow B}}\circ\mathcal{N}_{A\rightarrow B})(\proj{\rho}_{RA}))\\
&=\log F^2(\proj{\rho}_{RA},\mathcal{R}_{\rho_A\otimes \tau_R,\mathcal{N}_{A\rightarrow B}\otimes \mathcal{I}_{R\rightarrow R}}\circ(\mathcal{N}_{A\rightarrow B}\otimes \mathcal{I}_{R\rightarrow R})(\proj{\rho}_{RA}))\\
&\geq -D(\proj{\rho}_{RA}\| \tau_R\otimes \rho_A ) 
+ D(\mathcal{N}_{A\rightarrow B}(\proj{\rho}_{RA})\| \tau_R\otimes \mathcal{N}_{A\rightarrow B}(\rho_A) ) 
\label{eq:proof-c}\\
&=-D(\proj{\rho}_{RA}\| \tau_R\otimes\rho_A) + D(\sigma_{RB}\| \tau_R\otimes \sigma_B ) .
\end{align}
\eqref{eq:proof-c} follows from~\eqref{eq:d-recovery}. 
Taking the supremum over all such $\tau_R$, we find
\begin{align}
\log F_e(\rho_A,\mathcal{D}^\mathrm{P,twirled}_{B\rightarrow A}\circ\mathcal{N}_{A\rightarrow B})
&\geq \sup_{\substack{\tau_R\in \mathcal{S}(R):\\ \rho_R\ll\tau_R, \\ \tau_R\ll \rho_R}} 
[-D(\proj{\rho}_{RA}\| \tau_R\otimes\rho_A) + D(\sigma_{RB}\| \tau_R\otimes\sigma_B )]
\label{eq:proof-iii-1}\\
&\geq -I(R:A)_\rho +I(R:B)_\sigma
=H(R|A)_\rho-H(R|B)_\sigma
\label{eq:proof-iii-2}\\
&=D(\sigma_{RB}\| \sigma_R^{-1}\otimes \sigma_B).
\label{eq:proof-iii-3}
\end{align}
\eqref{eq:proof-iii-2} follows from~\eqref{eq:proof-iii-1} by choosing $\tau_R\coloneqq\rho_R=\sigma_R$. 
\eqref{eq:proof-iii-3} follows from~\eqref{eq:def-epsilon}.
\end{proof}

\subsection{Proof of Corollary~\ref{cor:perfect}}\label{app:corollary}
We will prove the equivalence of the statements by proving certain implications. 
\begin{proof}[(a) $\Rightarrow$ (f)]
Suppose~(a) holds, i.e., there exists $\mathcal{D}_{B\rightarrow A}\in \CPTP(B,A)$ such that 
$F_e(\rho_A,\mathcal{D}_{B\rightarrow A}\circ\mathcal{N}_{A\rightarrow B})
=1$. 
Then, $F(\proj{\rho}_{RA},\mathcal{D}_{B\rightarrow A}\circ\mathcal{N}_{A\rightarrow B}(\proj{\rho}_{RA}))
=1$, so
\begin{align}\label{eq:proof-cor1}
\proj{\rho}_{RA}=\mathcal{D}_{B\rightarrow A}\circ\mathcal{N}_{A\rightarrow B}(\proj{\rho}_{RA}).
\end{align}
Let $\ket{\sigma}_{RBE}\in RBE$ be such that $\tr_E[\proj{\sigma}_{RBE}]=\sigma_{RB}$. Then,
\begin{align}
0\leq I(R:E)_\sigma 
=H(R)_\sigma - H(R|E)_\sigma
&=H(R)_\rho +H(R|B)_\sigma 
\label{eq:proof-cor20}\\
&=-H(R|A)_\rho +H(R|B)_\sigma 
\label{eq:proof-cor2}\\
&=-H(R|A)_{\mathcal{D}_{B\rightarrow A}(\sigma_{RB}) } +H(R|B)_\sigma 
\leq 0
\label{eq:proof-cor3}
\end{align}
\eqref{eq:proof-cor20} follows from the duality of the conditional entropy. 
\eqref{eq:proof-cor2} holds because $\proj{\rho}_{RA}$ is a pure state. 
The equality in~\eqref{eq:proof-cor3} follows from~\eqref{eq:proof-cor1}. 
The inequality in~\eqref{eq:proof-cor3} follows from the data processing inequality for the conditional entropy. 
We can conclude that $I(R:E)_\sigma=0$, which implies that $\sigma_{RE}=\sigma_R\otimes \sigma_E$.
\end{proof}
\begin{proof}[(f) $\Rightarrow$ (e)] 
Let $\ket{\sigma}_{RBE}\in RBE$ be such that $\tr_E[\proj{\sigma}_{RBE}]=\sigma_{RB}$. 
Suppose~(f) holds, i.e., $\sigma_{RE}=\sigma_R\otimes \sigma_E$. Then,
\begin{align}
H(R|B)_\sigma =-H(R|E)_\sigma 
&=-H(R)_\sigma
\label{eq:proof_cor1}\\
&=-H(R)_\rho
=H(R|A)_\rho .
\label{eq:proof_cor2}
\end{align}
The first equality in~\eqref{eq:proof_cor1} follows from the duality of the conditional entropy, 
and the second equality follows from $\sigma_{RE}=\sigma_R\otimes \sigma_E$. 
The first equality in~\eqref{eq:proof_cor2} follows from $\rho_R=\sigma_R$, 
and the second equality follows from the fact that $\proj{\rho}_{RA}$ is a pure state.
\end{proof}
\begin{proof}[(e) $\Rightarrow$ (b) $\land$ (c) $\land$ (d)]
Suppose~(e) holds, i.e., $H(R|B)_\sigma=H(R|A)_\rho$. Then, 
$\varepsilon^{\mathrm{SW}}\coloneqq H(R|B)_{\sigma}-H(R|A)_{\rho}=0$. 
By Corollary~\ref{cor:sw}~(b), 
\begin{align}
0\geq \log F_e(\rho_A,\mathcal{D}^{\mathrm{SW}}_{B\rightarrow A}\circ\mathcal{N}_{A\rightarrow B})
\geq -\varepsilon^{\mathrm{SW}}=0,
\end{align}
which implies that 
$F_e(\rho_A,\mathcal{D}^{\mathrm{SW}}_{B\rightarrow A}\circ\mathcal{N}_{A\rightarrow B})=1$, i.e., (b).

By Corollary~\ref{cor:petz}~(c) and Remark~\ref{rem:sw-petz}, 
\begin{align}
0&\geq \log F_e(\rho_A,\mathcal{D}^{\mathrm{P}}_{B\rightarrow A}\circ\mathcal{N}_{A\rightarrow B})
\\
&\geq \log F_e(\rho_A,\mathcal{D}^{\mathrm{P,twirled}}_{B\rightarrow A}\circ\mathcal{N}_{A\rightarrow B})
\geq -\varepsilon^{\mathrm{SW}}=0,
\end{align}
which implies that 
$F_e(\rho_A,\mathcal{D}^{\mathrm{P}}_{B\rightarrow A}\circ\mathcal{N}_{A\rightarrow B})=1=F_e(\rho_A,\mathcal{D}^{\mathrm{P,twirled}}_{B\rightarrow A}\circ\mathcal{N}_{A\rightarrow B})$, i.e., (c) and (d).
\end{proof}
\begin{proof}[(b) $\lor$ (c) $\lor$ (d) $\Rightarrow$ (a)]
This implication is trivial because $\mathcal{D}_{B\rightarrow A}$ from (a) can be chosen to be 
$\mathcal{D}_{B\rightarrow A}^{\mathrm{SW}},\mathcal{D}_{B\rightarrow A}^{\mathrm{P}},$ or $\mathcal{D}_{B\rightarrow A}^{\mathrm{P,twirled}}$, respectively. 
\end{proof}

\bibliographystyle{arxiv_fullname}
\bibliography{bibfile}

\end{document}